\newtheorem{definition}{Definition}
\newtheorem{lemma}{Lemma}
\newtheorem{theorem}{Theorem}
\newtheorem{corollary}{Corollary}
\newtheorem{example}{Example}
\title{\LARGE \bf Modal Barriers to Controllability in Networks with Linearly-Coupled Homogeneous Subsystems}
\author{Mengran Xue\thanks{The authors are with the School of Electrical Engineering and Computer Science at Washington State University. This work has been generously supported by National Science Foundation Grants CNS-1545104 and CMMI-1635184.  Correspondence should be sent to sroy@eecs.wsu.edu. This paper has been submitted to the IEEE Transactions on Automatic Control.  Copyright is held by M. Xue and S. Roy.} and Sandip Roy}
\begin{document}
\maketitle

\begin{abstract}
The controllability of networks comprising homogeneous multi-input multi-output linear subsystems with linear couplings among them is
examined, from a modal perspective.  The  eigenvalues of the network model are classified into two groups: 1) network-invariant modes, which have very high multiplicity regardless of the network's topology; and 2) 
special-repeat modes, which are repeated for only particular network topologies and have bounded multiplicity.
Characterizations of both types of modes are obtained, in part by drawing on decentralized-fixed-mode and generalized-eigenvalue concepts.  We demonstrate that network-invariant modes necessarily prevent controllability unless a sufficient fraction of the subsystems are actuated, both in the network as a whole
and in any weakly-connected partition.  In contrast, the multiplicities of special-repeat modes have no influence on controllability.  Our analysis highlights a distinction between built networks where subsystem interfaces may be unavoidable barriers to controllability, and multi-agent systems where protocols can be designed to ensure controllability.
\end{abstract}

\section{Introduction}

Networks models made up of homogeneous subsystems with linear
couplings among them primarily arise in two contexts: 1) the analysis of synchronization phenomena in built networks like the electric power grid or
oscillator circuits \cite{sync1,sync2,sync3}; and 2) the design of control protocols for multi-agent systems
\cite{cons1,cons2,cons3}.  It is recognized that the methods used in the two contexts are very closely connected, with
the earlier work on synchronization often guiding the design of controls for multi-agent systems \cite{chen1}.
In both contexts, one recent focus has been to characterize the controllability of these linearly-coupled networks, when inputs can only be applied at a subset of the subsystems \cite{control1,control2,control3,control4,control5}.  These controllability analyses have specifically been motivated by questions about: 1)  the manipulability of built networks using a sparse set of inputs and 2) the guidance of multi-agent systems from a set
of leader agents. A main outcome of these analyses has been to distill controllability for the full model into global (network-level) and subsystem-level controllability conditions, whereupon purely graph-theoretic
results can be obtained.

The purpose of this technical note is to examine structural barriers to the controllability of linear-coupled networks that arise due to characteristics of these systems' spectra.  
The genesis of the study is an omission in the controllability analysis of linearly-coupled networks \cite{control1,miss1}, which
was identified in our recent work \cite{comment,control2} and also independently in \cite{control3}.  These efforts
show that global and subsystem-level controllability do not necessarily guarantee 
controllability of the full model, in the case  that the coupled subsystems are general
multi-input multi-output (MIMO) devices.  Instead, the global and subsystem models can intertwine in a complex way in deciding controllability, because of subtleties in the eigenvector analysis of the models in the case where the full dynamics of the 
network has repeated eigenvalues.  Based on this 
recognition, here we determine scenarios under which linearly-coupled networks have repeated eigenvalues, and study whether these repeated-eigenvalue scenarios are barriers to controllability.  The main outcome of the study
is a dichotomy of eigenvalues in linearly-coupled
network models with regard to their multiplicity.  In particular, eigenvalues of the model can be distinguished as: 1) {\em network-invariant modes}, which have very high multiplicity and necessarily prevent controllability unless a high fraction of network nodes are actuated; and
2) {\em special-repeat modes} which have bounded
multiplicity, and do not influence controllability.  
Characterizations are obtained for the two classes of repeated eigenvalues, in part by drawing on fixed-mode notions for decentralized systems and generalized-eigenvalue analysis techniques.

Our study also highlights a distinction in
the controllability of built networks as compared to multi-agent systems.  Interestingly, although the omission in
the controllability analysis has arisen in both contexts \cite{control1,miss1}, 
there is an important body of work in the multi-agent-systems literature that avoids the issue, and correctly obtains
graph-theoretic conditions for controllability \cite{control5}.  It does so
 by {\em designing}
the interaction protocols or input-output couplings among the agents (subsystems) to guarantee controllability.  Our study confirms that protocol design
can be used generally to eliminate nontrivial network-invariant modes, and hence to reduce controllability to a purely network-graph-based analysis.

The note is organized as follows.  The network model is described in Section II.  Spectral conditions for controllability are briefly summarized in Section III.  In Section IV, the dichotomy in repeated-eigenvalue scenarios
is detailed, and its implications on controllability 
are characterized from an algebraic and graph-theoretic standpoint.

\section{Model}

A network comprising identical linear subsystems with linear couplings, which is subject to external actuation at a subset of the subsystems, is considered.  Formally, a network with 
$N$ subsystems, labeled $1,\hdots,N$, is considered.  Each subsystem $i$ has a state ${\bf x}_i(t) \in R^n$ which evolves in continuous time ($t \in R^+$).  The state of each subsystem $i$ is governed by:
\begin{equation}
\dot{\bf x}_i = A {\bf x}_i +B(\sum_{j=1}^N g_{ij}
C {\bf x}_j+\alpha_i {\bf u}_i),
\end{equation}
where $A \in R^{n \times n}$, $B \in R^{n \times m}$, and 
$C \in R^{m \times n}$ are the common state, input, and output matrices of each subsystem; the scalars $g_{ij}$ indicate the strengths of the couplings between the subsystems; 
${\cal S}$ is a set of $M$ (distinct) subsystems which can be actuated; ${\bf u}_i \in R^m$ is the input provided to subsystem
$i$; and $\alpha_i=1$ if $i \in {\cal S}$ and $\alpha_i=0$ otherwise.  

The network's dynamics can be written in assembled form as:
\begin{equation}\label{full}
\dot{\bf x} = (I_N \otimes A + G \otimes BC){\bf x}+
(S \otimes B) {\bf u},
\end{equation}
where ${\bf x}=\begin{bmatrix} {\bf x}_1 \\ \vdots \\
{\bf x}_N \end{bmatrix}$, ${\bf u}$ stacks the input vectors
for the subsystems listed in set ${\cal S}$, $G=[g_{ij}]$, the matrix $S \in R^{N \times M}$ has columns which are 0--1 indicators vectors of the subsystems listed in ${\cal S}$, $I_N$ is an identity matrix with $N$ rows and columns, and
$\otimes$ represents the Kronecker product operation.
We notice that the triple $(C,A,B)$ specifies the input-output behavior of each subsystem, hence we refer to it as
the {\em subsystem model}.  Meanwhile, the {\em network matrix} $G$ and the {\em actuation-location matrix}
$S$ specify the interaction weights among subsystems and external input locations, respectively, hence we refer to the pair $(G,S)$ as the 
{\em global model}.  The subsystem model and the global model
together define the full network dynamics \eqref{full}, which
we refer to as the {\em linearly-coupled network}. 

Our focus in this study is on the controllability of the linearly-coupled network, as specified
by the pair $(I_N \otimes A + G \otimes BC, S \otimes B)$.
Many of our analyses are limited to the
case that the network matrix $G$ is diagonalizable, which encompasses the symmetric case.  We note that many of the network models considered in the literature assume that 
$G$ has a diffusive structure \cite{sync1,sync2,sync3,cons1,cons2,cons3,control1,control5}, which means that 
$G$ is an essentially-nonnegative matrix (sometimes with the additional requirement that its row sums are $0$); however, our analyses here do not depend on this additional 
structure.

To develop graph-theoretic results, we also find it convenient to define a weighted digraph $\Gamma$ which specifies the topology
of interactions among the subsystems.  Specifically, $\Gamma$ is defined as a graph on $N$ vertices labeled $1, \hdots, N$, which correspond to the subsystems in the network.  An edge is drawn from vertex $j$ to vertex $i$ ($i \neq j$) if $g_{ij} \neq 0$.   The vertices corresponding
to the actuated subsystems (i.e., the vertices identified
in ${\cal S}$) are referred to as the actuation locations in the graph.

The linearly-coupled network encompasses models for both
built-network dynamics (e.g., power-grid swing
dynamics, oscillator synchronization processes) and 
multi-agent-system coordination.  In built-network applications, all network parameters including the interfaces between subsystems are modeled as being
fixed, since they capture hardwired interactions.  
A  key distinction in  multi-agent systems is that the protocols which interface
the network subsystems (agents) are often considered as designable. In particular, the data provided to each agent by the network can
be processed in setting the agent's local input.  The designability of protocols in multi-agent systems can be captured by modeling the subsystem output matrix $C$ 
as being designable rather than fixed.  In this case, the output
matrix $C$ is naturally modeled as taking the form $C=H \widehat{C}$, where $\widehat{C}$ is fixed while $H$ is designable.  This formulation captures that subsystem outputs
defined by $\widehat{C}$ can be processed in setting
local inputs.

{\em Remark:} The model considered here is a specialization
of those considered in \cite{control1,control2,control3,control4}, in that 
the subsystem couplings and external actuation are
both restricted to act through the subsystem input matrix
$B$ (mathematically, the range space of the coupling $BC$
is contained within that of the input matrix $B$).  This additional ``subsystem" structure is apt for many application
domains (see \cite{control2} for further discussion), and has
a rich modal structure that enables structural characterization of controllability.

\section{Spectral Conditions for Controllability}

Conditions for the controllability of the linearly-coupled network model, which are based on a direct eigenvector analysis
of the model, are briefly discussed.  These conditions motivate the main study of repeated eigenvalues in the linearly-coupled network model.  They also represent a correct treatment of controllability of the model, which addresses the omission noted in \cite{comment}.  Since similar conditions were presented in a preliminary form in our previous work \cite{control2} and/or
in the independent study  \cite{control3}, proofs are omitted.
To present the results, it is convenient to define the $N$ eigenvalues of the network matrix $G$ as $\lambda_i$ ($i=1,\cdots,N$). We recall that the $NM$ eigenvalues  of the full state matrix $I\otimes A+G\otimes BC$ are the union of the eigenvalues of $A+\lambda_i BC$ for $i=1,\cdots,N$ \cite{sync1}.

First, a necessary condition for controllability is presented in the following lemma:

\begin{lemma}
The linearly-coupled network, as defined by the pair $(I \otimes A+G \otimes BC, S \otimes B)$, is controllable only if 
1) the pair $(G,S)$ is controllable and 2) the pairs $(A+\lambda_i BC, B)$ are controllable for $i=1,\hdots,N$.  \label{lem:necc} 
\end{lemma}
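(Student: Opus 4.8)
The plan is to prove the contrapositive of each of the two implications using the Popov--Belevitch--Hautus (PBH) left-eigenvector test: a pair $(\mathcal{A},\mathcal{B})$ is uncontrollable exactly when $\mathcal{A}$ has a nonzero left eigenvector $w$ (so $w^*\mathcal{A}=\nu w^*$ for some scalar $\nu$) with $w^*\mathcal{B}=0$. Thus it suffices to show that whenever either $(G,S)$ is uncontrollable, or some $(A+\lambda_i BC,B)$ is uncontrollable, the full state matrix $I_N\otimes A+G\otimes BC$ has such a left eigenvector lying in the left null space of $S\otimes B$. In both cases I would construct that eigenvector in Kronecker-product form $w=v\otimes z$ and exploit the mixed-product identity $(v\otimes z)^*(P\otimes Q)=(v^*P)\otimes(z^*Q)$.

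First I would treat condition~1. If $(G,S)$ is uncontrollable, PBH supplies $v\neq 0$ with $v^*G=\mu v^*$ and $v^*S=0$; since $\mu$ is then an eigenvalue of $G$, the matrix $A+\mu BC$ has some left eigenvector $z\neq 0$, say $z^*(A+\mu BC)=\nu z^*$. Taking $w=v\otimes z$, one checks $w^*(I_N\otimes A+G\otimes BC)=(v^*\otimes z^*A)+\mu(v^*\otimes z^*BC)=v^*\otimes z^*(A+\mu BC)=\nu w^*$, and $w^*(S\otimes B)=(v^*S)\otimes(z^*B)=0$; hence the full pair is uncontrollable.

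Next I would treat condition~2. If $(A+\lambda_i BC,B)$ is uncontrollable for some $i$, PBH supplies $z\neq 0$ with $z^*(A+\lambda_i BC)=\nu z^*$ and $z^*B=0$; since $\lambda_i$ is an eigenvalue of $G$, choose $v\neq 0$ with $v^*G=\lambda_i v^*$. Then $w=v\otimes z$ again satisfies $w^*(I_N\otimes A+G\otimes BC)=v^*\otimes z^*(A+\lambda_i BC)=\nu w^*$ and $w^*(S\otimes B)=(v^*S)\otimes(z^*B)=0$, so the full pair is uncontrollable. Combining the two cases would complete the proof of the lemma.

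I do not expect a real obstacle here; the argument is essentially Kronecker-algebra bookkeeping, and the only points needing care are that $\mu$ (respectively $\lambda_i$) is genuinely an eigenvalue of $G$, so that the companion eigenvector $z$ (respectively $v$) is guaranteed to exist, and that no diagonalizability assumption on $G$ is required, since a single left eigenvector for the relevant eigenvalue always exists. I would also remark in passing that, because the subsystem coupling acts through $B$, the relation $z^*B=0$ forces $z^*BC=0$, so $(A+\lambda_i BC,B)$ is uncontrollable if and only if $(A,B)$ is; thus condition~2 is really a coupling-independent subsystem condition. Finally, I would emphasize what the lemma does \emph{not} claim: the two conditions are not jointly sufficient for controllability of the full model when the subsystems are genuine MIMO devices, which is precisely the phenomenon the rest of the note is built to address.
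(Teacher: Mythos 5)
Your proof is correct: the PBH left-eigenvector construction $w = v\otimes z$ together with the mixed-product identity handles both necessary conditions, and no diagonalizability of $G$ is needed. The paper omits its own proof (deferring to \cite{control2,control3}), but this Kronecker/PBH argument is precisely the standard one those references use, and your side remark that $z^*B=0$ forces $z^*BC=0$ is exactly the observation behind the paper's Corollary~\ref{corr:necc}.
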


This necessary condition can be further simplified into
separate conditions on the global and subsystem models, via
a simple application of the Hautus lemma:

\begin{corollary}
The linearly-coupled network, as defined by the pair $(I \otimes A+G \otimes BC, S \otimes B)$, is controllable only if 
1) the global model is controllable (i.e.,  the pair $(G,S)$ is controllable) and 2) the subsystem model is controllable (i.e., the pair $(A, B)$ is controllable).  \label{corr:necc} 
\end{corollary}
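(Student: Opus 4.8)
The plan is to get condition~1 for free and to spend the work on condition~2. Lemma~\ref{lem:necc} already lists controllability of the pair $(G,S)$ among its necessary conditions, so condition~1 of the corollary requires no additional argument. The remaining task is to show that if the pairs $(A+\lambda_i BC,B)$ are all controllable (in fact a single $i$ suffices), then the bare subsystem pair $(A,B)$ is controllable; I would establish this by contraposition, using the Hautus (PBH) eigenvector test.

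Concretely, suppose $(A,B)$ is \emph{not} controllable. Then the Hautus lemma furnishes a scalar $\mu \in \mathbb{C}$ and a nonzero row vector $v^{*} \in \mathbb{C}^{1 \times n}$ with $v^{*}A = \mu v^{*}$ and $v^{*}B = 0$. Fixing any index $i$ and computing, $v^{*}(A + \lambda_i BC) = v^{*}A + \lambda_i (v^{*}B)C = \mu v^{*} + 0 = \mu v^{*}$, so $v^{*}$ remains a left eigenvector, and $v^{*}B = 0$ still holds. Hence $[\,\mu I - (A + \lambda_i BC) \;\; B\,]$ is rank deficient, i.e.\ $(A+\lambda_i BC,B)$ is not controllable, contradicting conclusion~2 of Lemma~\ref{lem:necc}. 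Therefore $(A,B)$ must be controllable, which is condition~2.

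The only structural fact used — and thus the sole subtle point — is the cancellation $\lambda_i (v^{*}B)C = 0$, which hinges on the coupling acting through the input matrix $B$ (equivalently, the range of $BC$ lying in the range of $B$), exactly the specialization highlighted in the model section. I would also remark that the reverse implication fails: controllability of $(A,B)$ need not imply controllability of $(A+\lambda_i BC,B)$. This is precisely why Lemma~\ref{lem:necc} cannot be tightened by replacing its per-mode conditions with the single requirement that $(A,B)$ be controllable, and it foreshadows why the repeated-eigenvalue phenomena studied in the remainder of the note are genuine obstructions not visible from the global and subsystem models in isolation.
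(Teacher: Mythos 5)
Your proof is correct and follows exactly the route the paper indicates (Lemma~\ref{lem:necc} plus ``a simple application of the Hautus lemma''): a left eigenvector of $A$ annihilating $B$ survives as a left eigenvector of $A+\lambda_i BC$ annihilating $B$, so uncontrollability of $(A,B)$ propagates to every pair $(A+\lambda_i BC, B)$. Your added observations — that a single index $i$ suffices and that the converse fails — are accurate and consistent with the paper's surrounding discussion.
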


There has been inclarity about whether the condition in Lemma \ref{lem:necc} is also sufficient, as noted in \cite{comment,control2,control3}.  Essentially, the reason why the condition is not sufficient is that the matrices $A+\lambda_i BC$, $i=1,\hdots, N$, may share eigenvalues across them. These eigenvalues are repeated eigenvalues of the full network model \eqref{full}, whose eigenspaces may contain eigenvectors other than ones that are Kronecker products of the global- and subsystem-level- eigenvectors.  Hence, it is possible that the PBH test for controllability may fail, even when the global and subsystem models are controllable.  We also notice that the simplification in
Corollary \ref{corr:necc} depends on the network having a ``subsystem" structure, rather than arbitrary couplings among its components.

In the case where the network matrix $G$ is diagonalizable, a
necessary and sufficient condition for controllability can 
be obtained by enumerating the eigenvectors associated with
each repeated eigenvalue of the full network model \eqref{full}.  
The condition requires some additional notation.  We refer to the left eigenvectors of $G$ associated with each eigenvalue $\lambda_j$, $j=1,\hdots, N$,
as ${\bf v}_j^T$.
Further, for each distinct eigenvalue
$\mu_i$, $i=1,\hdots,z$, of the full network dynamics \eqref{full}, 
the geometric multiplicity of the eigenvalue in the matrix $A+\lambda_jBC$ is denoted as $\beta_j(\mu_i)$.  
The corresponding left eigenvectors of $A+\lambda_j BC$ (if there are any)
are denoted as ${\bf w}_{jl}^T$, where $l=1,\hdots, \beta_j(\mu_i)$. Here is the
necessary and sufficient condition:
\begin{lemma}
Assume that the linearly-coupled network \eqref{full} has a diagonalizable network matrix $G$.  
The network, as defined by the pair $(I \otimes A+G \otimes BC, S \otimes B)$, is controllable if and only if 
the following condition holds for each $i=1,\hdots, z$:
$\sum_{j=1}^N \sum_{l=1}^{\beta_{j}(\mu_i)} a_{j,l} ({\bf v}_j^T S) \otimes
({\bf w}_{jl}^T B) \neq 0$
for any real scalar coefficients $a_{j,l}$  that are not all identically $0$.
\label{th:neccsuff}
\end{lemma}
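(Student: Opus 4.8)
The plan is to apply the Popov-Belevitch-Hautus (PBH) eigenvector test to the pair $(I\otimes A+G\otimes BC,\,S\otimes B)$ and to exploit the Kronecker structure of the model to enumerate, for each eigenvalue of the full state matrix, the associated left eigenvectors. Write $M\defn I\otimes A+G\otimes BC$. By the PBH test, the network is controllable if and only if, for every eigenvalue $\mu_i$ of $M$ ($i=1,\dots,z$), there is no nonzero left eigenvector $\mathbf{p}^T$ of $M$ for $\mu_i$ satisfying $\mathbf{p}^T(S\otimes B)=\mathbf{0}$. It therefore suffices to (i) give a complete parametrization of the left eigenspace of $M$ for each $\mu_i$, and then (ii) translate the annihilation requirement into the stated condition on the coefficients $a_{j,l}$.

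For step (i), I would use the diagonalizability of $G$. Collecting the left eigenvectors of $G$ as the rows of a matrix $V$ (so that $VG=\Lambda V$ with $\Lambda=\mathrm{diag}(\lambda_1,\dots,\lambda_N)$), the similarity transformation by $V\otimes I_n$ yields $(V\otimes I_n)M(V\otimes I_n)^{-1}=I\otimes A+\Lambda\otimes BC$, a block-diagonal matrix with diagonal blocks $A+\lambda_j BC$, $j=1,\dots,N$. The left eigenspace of a block-diagonal matrix for a given $\mu_i$ is the direct sum over $j$ of the left eigenspaces of the blocks $A+\lambda_j BC$ associated with $\mu_i$; hence an arbitrary left eigenvector of the block-diagonal matrix is $\sum_{j=1}^N\sum_{l=1}^{\beta_j(\mu_i)}a_{j,l}\,(\mathbf{e}_j^T\otimes\mathbf{w}_{jl}^T)$ for real scalars $a_{j,l}$. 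Undoing the similarity and using $\mathbf{e}_j^T V=\mathbf{v}_j^T$, an arbitrary left eigenvector of $M$ for $\mu_i$ is $\mathbf{p}^T=\sum_{j=1}^N\sum_{l=1}^{\beta_j(\mu_i)}a_{j,l}\,(\mathbf{v}_j^T\otimes\mathbf{w}_{jl}^T)$. Since the $\mathbf{v}_j$ are linearly independent and, for each fixed $j$, the $\mathbf{w}_{jl}$ are linearly independent, the collection $\{\mathbf{v}_j^T\otimes\mathbf{w}_{jl}^T\}$ is linearly independent, so $\mathbf{p}^T\neq\mathbf{0}$ exactly when the $a_{j,l}$ are not all zero.

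For step (ii), I would apply the mixed-product rule for Kronecker products to obtain $\mathbf{p}^T(S\otimes B)=\sum_{j=1}^N\sum_{l=1}^{\beta_j(\mu_i)}a_{j,l}\,(\mathbf{v}_j^T S)\otimes(\mathbf{w}_{jl}^T B)$. Substituting this into the PBH test shows that controllability is equivalent to requiring, for each $i$, that this sum be nonzero whenever the $a_{j,l}$ are not all zero, which is exactly the claimed condition. When $\mu_i$ is not an eigenvalue of some $A+\lambda_j BC$ one has $\beta_j(\mu_i)=0$, and that index $j$ simply does not contribute to the sums.

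The main obstacle is step (i), and it is precisely the point overlooked in the earlier controllability analyses: the left eigenspace of $M$ associated with a repeated eigenvalue $\mu_i$ generally contains vectors that are \emph{not} single Kronecker products $\mathbf{v}_j^T\otimes\mathbf{w}_{jl}^T$, but rather arbitrary linear combinations of such products drawn from every block $j$ for which $\mu_i$ is an eigenvalue of $A+\lambda_j BC$; a correct count of the eigenspace must therefore range over all such $j$. A secondary technical point is to confirm that the relevant data are the \emph{geometric} multiplicities $\beta_j(\mu_i)$ rather than the algebraic ones, which holds because the PBH test involves only genuine eigenvectors and is thus insensitive to the Jordan structure within each block $A+\lambda_j BC$.
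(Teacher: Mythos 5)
Your proof is correct and follows exactly the route the paper intends (the paper itself omits the proof, citing prior work): a similarity by $V\otimes I_n$ reduces the state matrix to the block-diagonal form $I\otimes A+\Lambda\otimes BC$, the left eigenspace of each repeated eigenvalue is parametrized by linear combinations of $\mathbf{v}_j^T\otimes\mathbf{w}_{jl}^T$ across all blocks, and the PBH test together with the mixed-product rule yields the stated condition. The only caveat is one inherited from the lemma statement itself rather than from your argument: when $\mu_i$ or the $\lambda_j$ are complex, the left eigenvectors are complex and the coefficients $a_{j,l}$ in the PBH test should be allowed to range over the complex numbers, not merely the reals.
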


The necessary and sufficient condition in Lemma \ref{th:neccsuff} is important as a complete formal treatment of controllability of the linearly-coupled network \cite{control2,control3}.  However, it is sometimes difficult to apply directly to obtain graph-theoretic or structural results on controllability, because the roles of the global and subsystem models are intertwined in a complicated way.  Noting that the intertwining of the global and subsystem conditions is tied to the presence/absence of repeated eigenvalues across the matrices
$A+ \lambda_i BC$, simpler sufficient or necessary conditions for controllability can be obtained based on eigenvalue multiplicities.  Two useful results of this sort are formalized in the following lemma:
\begin{lemma}
Consider the linearly-coupled network, as defined by the pair $(I \otimes A+G \otimes BC, S \otimes B)$, 
for diagonalizable $G$. The model is controllable if  
the following two conditions hold: 1) the global model $(G,S)$ is controllable; 2) no two matrices $A+\lambda_i BC$ and $A+\lambda_j BC$ ($i=1,\hdots, N$, $j=1,\hdots, N$) such that $\lambda_i \neq \lambda_j$ share eigenvalues. 
Meanwhile, the linearly-coupled network model is not controllable if $Mm < P(\mu)$ for any complex number $\mu$, where $P(\mu)$ is the sum of the geometric
multiplicities of an eigenvalue $\mu$ across matrices $A+\lambda_i BC$.
 \label{lem:suff}
\end{lemma}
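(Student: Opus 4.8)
The plan is to prove the two assertions separately: the sufficient condition from the exact criterion of Lemma~\ref{th:neccsuff}, and the necessary condition from a direct Popov--Belevitch--Hautus (PBH) rank count. For the \textbf{sufficient} part I would verify that the condition of Lemma~\ref{th:neccsuff} holds for every distinct eigenvalue $\mu_i$ of the full state matrix. Fixing $\mu_i$, let $\mathcal{J}$ collect the indices $j$ with $\mu_i\in\operatorname{spec}(A+\lambda_j BC)$, so that $\beta_j(\mu_i)=0$ for $j\notin\mathcal{J}$. The crucial observation is that hypothesis~2) forces every index in $\mathcal{J}$ to correspond to a single value $\nu$ of the network-matrix eigenvalue: two indices $j_1,j_2\in\mathcal{J}$ with $\lambda_{j_1}\neq\lambda_{j_2}$ would make $A+\lambda_{j_1}BC$ and $A+\lambda_{j_2}BC$ share the eigenvalue $\mu_i$, contradicting~2). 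Hence $A+\lambda_j BC=A+\nu BC=:\widetilde{A}$ for every $j\in\mathcal{J}$, the geometric multiplicity $\beta:=\beta_j(\mu_i)$ is common to all such $j$, and the associated left eigenvectors may be taken from one fixed basis ${\bf w}_1^T,\ldots,{\bf w}_\beta^T$ of the $\mu_i$-eigenspace of $\widetilde{A}$. Bilinearity of the Kronecker product then rewrites the sum in Lemma~\ref{th:neccsuff} as $\sum_{l=1}^{\beta}({\bf p}_l^TS)\otimes({\bf w}_l^TB)$, where ${\bf p}_l^T=\sum_{j\in\mathcal{J}}a_{j,l}{\bf v}_j^T$.

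It remains to show this sum is nonzero whenever the $a_{j,l}$ are not all zero, which I would do in three steps. First, since $G$ is diagonalizable, $\{{\bf v}_j^T:j\in\mathcal{J}\}$ is a linearly independent family (a basis of the left-eigenspace of $G$ for $\nu$), so choosing $(j_0,l_0)$ with $a_{j_0,l_0}\neq0$ makes ${\bf p}_{l_0}^T$ a nonzero left eigenvector of $G$ for $\nu$, whence controllability of $(G,S)$ gives ${\bf p}_{l_0}^TS\neq0$ by the PBH test. Second, since the subsystem model $(A,B)$ is controllable---a condition necessary for network controllability by Corollary~\ref{corr:necc}, and equivalent to controllability of $(\widetilde{A},B)$ because $[\mu I-A\mid B]$ and $[\mu I-\widetilde{A}\mid B]$ have equal rank for every $\mu$---the rows ${\bf w}_1^TB,\ldots,{\bf w}_\beta^TB$ are linearly independent (otherwise a nonzero left eigenvector of $\widetilde{A}$ at $\mu_i$ would lie in the left null space of $B$). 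Third, an elementary fact closes the argument: if row vectors $\{{\bf y}_l\}$ are linearly independent, then $\sum_l{\bf x}_l\otimes{\bf y}_l=0$ forces every ${\bf x}_l=0$, as one sees by reshaping the sum into the matrix $\sum_l{\bf x}_l^T{\bf y}_l$ and right-multiplying by a right inverse of the full-row-rank matrix whose rows are the ${\bf y}_l$. Taking ${\bf x}_{l_0}={\bf p}_{l_0}^TS\neq0$ then shows $\sum_{l}({\bf p}_l^TS)\otimes({\bf w}_l^TB)\neq0$, so Lemma~\ref{th:neccsuff} yields controllability.

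For the \textbf{necessary} part I would count the geometric multiplicity of $\mu$ in the full state matrix. Since $G$ is diagonalizable, $I_N\otimes A+G\otimes BC$ is similar to the block-diagonal matrix with diagonal blocks $A+\lambda_i BC$, $i=1,\ldots,N$; hence $\dim\ker\bigl(\mu I-(I_N\otimes A+G\otimes BC)\bigr)=\sum_i\beta_i(\mu)=P(\mu)$, so $\operatorname{rank}\bigl(\mu I-(I_N\otimes A+G\otimes BC)\bigr)=Nn-P(\mu)$. Since $\operatorname{rank}(S\otimes B)=\operatorname{rank}(S)\operatorname{rank}(B)\le Mm$ (the columns of $S$ being $M$ distinct standard unit vectors), subadditivity of rank gives $\operatorname{rank}\bigl[\,\mu I-(I_N\otimes A+G\otimes BC)\ \big|\ S\otimes B\,\bigr]\le Nn-P(\mu)+Mm<Nn$ whenever $Mm<P(\mu)$; the PBH test therefore fails at $\mu$ and the network is not controllable.

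I expect the main obstacle to lie in the sufficiency argument: establishing rigorously that hypothesis~2) collapses the family $\{A+\lambda_j BC:j\in\mathcal{J}\}$ to the single matrix $\widetilde{A}$---so that a common left-eigenvector basis is available---and then converting the resulting Kronecker sum into the linear-independence statement of the third step. The case in which $\nu$ is itself a repeated eigenvalue of $G$ is precisely where the diagonalizability hypothesis is genuinely used, and it must be handled carefully. The necessary part, by contrast, is a routine rank count.
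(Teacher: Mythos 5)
The paper omits its own proof of this lemma (it defers to the authors' prior work), so there is no in-paper argument to compare against; your overall architecture --- sufficiency via the exact criterion of Lemma~\ref{th:neccsuff} after collapsing the relevant blocks to a single $\widetilde{A}=A+\nu BC$, and necessity via a PBH rank count using the block-diagonalization of $I_N\otimes A+G\otimes BC$ --- is the natural one. The necessity half is correct as written, and so are the Kronecker bilinearity step, the identification of $\{{\bf v}_j^T: j\in\mathcal{J}\}$ with a basis of the left $\nu$-eigenspace of $G$, and the reshaping argument that $\sum_l {\bf x}_l\otimes{\bf y}_l=0$ with independent ${\bf y}_l$ forces all ${\bf x}_l=0$.

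There is, however, one genuine logical flaw in the sufficiency half: your justification that $(A,B)$ (equivalently $(\widetilde{A},B)$) is controllable is circular. You invoke Corollary~\ref{corr:necc}, which states that subsystem controllability is \emph{necessary for} controllability of the network --- but controllability of the network is exactly the conclusion you are trying to establish, so it cannot be assumed. The fact you need must instead be extracted from hypothesis~2): if $\widetilde{A}$ had a nonzero left eigenvector ${\bf w}^T$ at $\mu_i$ with ${\bf w}^TB=0$, then ${\bf w}^TBC=0$ as well, so ${\bf w}^TA=\mu_i{\bf w}^T$ and hence ${\bf w}^T(A+\lambda BC)=\mu_i{\bf w}^T$ for \emph{every} $\lambda$; thus $\mu_i$ would be an eigenvalue of every block $A+\lambda_jBC$, contradicting hypothesis~2) as soon as $G$ has at least two distinct eigenvalues. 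With that substitution, your linear-independence claim for ${\bf w}_1^TB,\ldots,{\bf w}_\beta^TB$ --- and with it the whole sufficiency argument --- goes through. The repair also exposes the degenerate case the statement silently excludes: if $G=\lambda_0I_N$, hypothesis~2) is vacuous, hypothesis~1) forces $M=N$, yet the network reduces to $N$ decoupled copies of $(A+\lambda_0BC,B)$ and is uncontrollable whenever $(A,B)$ is. So subsystem controllability genuinely cannot be obtained as a "necessary condition of the conclusion"; it has to come from hypothesis~2), and only does so when $G$ has two or more distinct eigenvalues.
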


The lemma shows that controllability reduces to a global (or graph- based) condition, if the 
matrices $A+\lambda_iBC$ corresponding to distinct $\lambda_i$ do not share eigenvalues. We note here that the subsystem's controllability is necessary (but not sufficient) for the 
matrices $A+\lambda_iBC$ to not share eigenvalues. On the other hand, controllability
is necessarily lost if an eigenvalue recurs in a sufficient
number of blocks $A+\lambda_i BC$ as compared to the number of inputs.  We note that these conditions may not be
tight, as controllability depends on the specific 
structure of the eigenspaces associated with the repeated eigenvalues.  However, the result focuses attention on
the close relationship between repeated eigenvalues
across the matrices $A+\lambda_i BC$ and controllability,
and provides a means to assess controllability solely via an eigenvalue analysis.

{\em Remark:} The controllability analysis become more sophisticated if $G$ is a defective matrix, because the linearly-coupled network's state matrix $I_N \otimes A + G \otimes BC$ may have eigenvectors that neither
Kronecker products of global and subsystem-level eigenvectors, nor combinations thereof.  We refer the reader
to \cite{johnson,xuedefective} for an eigenvector analysis for  Kronecker products of defective matrices,
which is a starting point for a treatment of the linearly-coupled network with
defective $G$.

\section{A Dichotomy of Repeated-Eigenvalue Scenarios}

In this section, we study conditions under which
matrices of the form $A+\lambda BC$ may share eigenvalues for
two distinct complex scalars $\lambda$.  The main outcome
of the analysis is a dichotomy of scenarios: 
\begin{enumerate}
\item  In some cases, matrices $A+ \lambda BC$ have eigenvalues that are fixed over the field of complex $\lambda$, i.e. all matrices of this form have shared eigenvalue(s).
We refer to such eigenvalues as {\em network-invariant modes}. 
\item  Alternately, in other cases, the matrices $A+\lambda BC$ only share eigenvalues if the scalars $\lambda$ are chosen from particular finite sets with bounded cardinality.
We refer to these eigenvalues as {\em special-repeat modes}.
\end{enumerate}
These scenarios are differentiated and characterized in the following development, via formal analyses and examples.  Also, the implications of the scenarios on the controllability of the linear-network model are determined.  A main finding is that network-invariant modes place severe restrictions on controllability of the linearly-coupled network, while 
special-repeat modes cannot modulate controllability.
This analysis also differentiates
controllability of built networks as compared to designable multi-agent systems.  

We note that there has been an intense research effort to characterize the spectra of matrices of the form $A+\lambda BC$ (or $A+ \lambda D$) over the field of complex $\lambda$, primarily to analyze the stability of network synchronization and multi-agent system dynamics \cite{sync1,sync2,sync3,cons1,cons2,cons3}.  The stability analysis is often encoded using a {\em master stability function}, which identifies the region of values $\lambda$ in the complex plane for which Hurwitz stability is achieved.  The possible shapes of the master stability region have been 
studied extensively \cite{chen2}.  To the best of our knowledge, however, scenarios where the matrices $A+\lambda BC$ share eigenvalues at multiple points in the complex plane have not been determined.  Our study approaches this question,  in part by noting connections to
decentralized-control and generalized-eigenvalue notions.

\subsection{Scenarios with Network-Invariant Modes}

We first consider the possibility that the matrices $A+\lambda BC$ may have eigenvalues that remain fixed over the entire field of complex $\lambda$.  
Formally, let us define an eigenvalue that is invariant
over the field of complex $\lambda$ as follows:
\begin{definition}
A complex number $\mu$ is called a {\em network-invariant mode} of the linearly-coupled network \eqref{full}, if $\mu$ is an eigenvalue 
of $A+\lambda BC$ for any complex $\lambda$.  
\end{definition}
We first delineate scenarios under which the linearly-coupled network has network-invariant modes.  We then study the implications of network-invariant modes on the controllability of the linearly-coupled network. 

First, it is apparent that any eigenvalue of $A$ which is
an uncontrollable or unobservable mode of the subsystem model is a network-invariant mode.  This is because an uncontrollable or unobservable eigenvalue is invariant 
to any feedback, in the sense that it is an eigenvalue of $A+BKC$  for any complex matrix $K$, including $K=\lambda I$. 
It is natural to ask whether the uncontrollable and unobservable eigenvalues of the subsystem model are the only network-invariant modes.  If this is the case, then uncontrollability for a generic network matrix may only arise
in the degenerate case that the subsystem model is uncontrollable or unobservable.
However, the following example illustrates that the linearly-coupled network may have network-invariant modes even if the subsystem model is controllable and observable, and indeed controllability
of the linearly-coupled network may be lost:

\begin{example}
Consider a linear-coupled network with: 
$A=\begin{bmatrix} 0 & 0 & 0 \\ 0 & 0 & 1 \\ 1 & 0 & 1 \end{bmatrix}$, $B=\begin{bmatrix} 1 & 0 \\ 0 & 1 \\ 
0 & 0 \end{bmatrix}$, 
$C=\begin{bmatrix} 1 & 0 & 0 \\ 0 & 2 & 0 \end{bmatrix}$,
$G=\begin{bmatrix} 2 & -2 & 0 \\ -2 & 4 & -2 \\ 0 & -2 & 2 \end{bmatrix}$, and $S=\begin{bmatrix} 1 \\ 0 \\ 0 \end{bmatrix}$.  For this model, the local subsystem
is immediately seen to be both observable and controllable.
However, the matrix $A+\lambda BC$ is equal to
$\begin{bmatrix} \lambda & 0 & 0 \\ 0 & 2 \lambda & 1 \\
1 & 0 & 1 \end{bmatrix}$.  Hence, $A+\lambda BC$ has
a network-invariant mode at $\mu=1$.  Noting that
$G$ is diagonalizable, it follows immediately
the state matrix of the linearly-coupled network model has an eigenvalue
$\mu=1$ with geometric multiplicity $3$ (one eigenvalue
at $1$ corresponding to each of the blocks $A+\lambda_iBC$,
where $\lambda_i=0,2,6$ are the eigenvalues of $G$ for the example).  Meanwhile,
the model has only two inputs (both applied
at subsystem 1), hence it cannot be controllable.
\label{ex:1}
\end{example}

Example \ref{ex:1} shows that network-invariant modes may be present even when the local subsystem does not 
have unobservable/uncontrollable modes, because matrices of the form $A+\lambda BC$ represent a structured feedback as compared to arbitrary matrices of the form $A+BKC$.   The example is a starting point point toward a more general understanding of structures that lead to network-invariant modes.  In particular, the local subsystem in the example can be seen to have a {\em decentralized fixed mode} \cite{davison} if each input and measurement
is considered as a separate feedback channel, even though
it is centrally observable and controllable.  Indeed, in general, decentralized fixed modes in the local subsystem are necessarily network-invariant modes of the linearly-coupled network.  To formalize this concept, we apply the definition of a decentralized fixed mode, which was initially developed in the Wang and Davison's seminal work on decentralized control \cite{davison}, to the local-subsystem model.  
\begin{definition}
The complex number $\mu$ is said to be a {\em decentralized fixed mode} of the local-subsystem model $(C,A,B)$, if 
$\mu$ is an eigenvalue of $A+BKC$ for any real diagonal matrix $K$.
\end{definition}

The decentralized fixed modes of the local
subsystem model are a subset of the network-invariant modes, as formalized in the following lemma:
\begin{lemma}
If $\mu$ is a decentralized fixed mode of the local subsystem model, it is also a network-invariant mode of the linearly-coupled network. \label{lem:decent}
\end{lemma}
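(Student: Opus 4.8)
The plan is to reduce the statement to a purely algebraic fact about the characteristic polynomial of $A$ under structured output feedback, and then to upgrade the real-diagonal feedback appearing in the definition of a decentralized fixed mode to the complex scalar feedback $K=\lambda I_m$ that appears in the network blocks. The starting observation is simply that $A+\lambda BC = A + B(\lambda I_m)C$, so each block $A+\lambda BC$ of the network state matrix is of the form $A+BKC$ for the particular diagonal choice $K=\lambda I_m$. Hence it suffices to show: if $\mu$ is an eigenvalue of $A+BKC$ for every \emph{real} diagonal $K$, then $\mu$ is an eigenvalue of $A+BKC$ for every \emph{complex} diagonal $K$; specializing to $K=\lambda I_m$ then gives exactly the defining property of a network-invariant mode.

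To carry this out, I would write $B$ in terms of its columns $b_1,\dots,b_m$ and $C$ in terms of its rows $c_1^T,\dots,c_m^T$, so that $BKC=\sum_{i=1}^m k_i\, b_i c_i^T$ when $K=\mathrm{diag}(k_1,\dots,k_m)$, and consider
\[
p(k_1,\dots,k_m) \;=\; \det\!\Big(\mu I_n - A - \sum_{i=1}^m k_i\, b_i c_i^T\Big).
\]
Expanding the determinant shows that $p$ is a polynomial in the $m$ variables $k_1,\dots,k_m$. The hypothesis that $\mu$ is a decentralized fixed mode of $(C,A,B)$ says precisely that $p(k_1,\dots,k_m)=0$ for every $(k_1,\dots,k_m)\in\R^m$. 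Since a polynomial vanishing on all of $\R^m$ is the zero polynomial, $p$ vanishes at every complex tuple as well. In particular $p(\lambda,\dots,\lambda)=\det(\mu I_n - A - \lambda BC)=0$ for every $\lambda\in\mathbb{C}$, so $\mu$ is an eigenvalue of $A+\lambda BC$ for all complex $\lambda$; that is, $\mu$ is a network-invariant mode of the linearly-coupled network.

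There is no deep obstacle here; the one point that deserves to be made explicit — and the only place the argument could seem to slip — is the passage from the \emph{real} diagonal feedback matrices used to define a decentralized fixed mode to the \emph{complex} scalar feedback $\lambda I_m$ that governs the network spectrum. This is exactly what the elementary polynomial-vanishing fact above supplies, and once it is in place the identity $A+\lambda BC = A + B(\lambda I_m)C$ finishes the proof. One could alternatively cite the standard fact (from the decentralized-control literature, e.g.\ \cite{davison}) that the set of decentralized fixed modes is unchanged under complex rather than real diagonal feedback, but spelling out the short polynomial argument keeps the note self-contained.
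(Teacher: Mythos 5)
Your proof is correct and follows essentially the same route as the paper: specialize the decentralized-fixed-mode hypothesis to the feedback $K=\lambda I_m$ and then use a polynomial-vanishing argument to pass from real to complex $\lambda$. The only cosmetic difference is that the paper first sets $K=\lambda I$ and works with the univariate polynomial $\det(\mu I-(A+\lambda BC))$ in $\lambda$ (vanishing on all of $\R$, hence identically zero), whereas you keep the full $m$-variable polynomial in $(k_1,\dots,k_m)$ before specializing; both hinge on the same elementary fact.
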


\begin{proof}
Since $\mu$ is a decentralized fixed mode of the local subsystem model, 
$det(sI-(A+BKC))=0$ has a solution at $s=\mu$ for all real diagonal $K$.  Choosing $K=\lambda I$, we immediately
recover that $det(sI-(A+\lambda BC))=0$ has a solution
at $s=\mu$ for all real $\lambda$, i.e. $det(\mu I-(A+\lambda BC))=0$ for all real $\lambda$.

Now consider $det(\mu I-(A+\lambda BC))$ over the field of complex $\lambda$. The determinant
is a polynomial in $\lambda$ with degree equal to rank
$BC$.  It follows that the equation $det(\mu I-(A+\lambda BC))=0$ either holds for a finite set of $\lambda$ in the complex plane, or holds for all $\lambda$.  Since
the equation holds for all real $\lambda$, it thus must
hold for all $\lambda$.  We have thus shown that
$\mu$ is an eigenvalue of $A+\lambda BC$ for all complex $\lambda$, i.e. it is a network-invariant mode of the linearly-coupled model.  $\square$
\end{proof}

Numerous
algorithmic as well as structural methods for understanding decentralized fixed modes  have been developed in the literature \cite{davison,davison2,morse}.  Per Lemma \ref{lem:decent}, these methods can be applied to characterize local subsystem models
that lead to network-invariant modes.

{\em Remark:} An immediate consequence of Lemma \ref{lem:decent} is that the ordering of the subsystem input and output channels may influence the presence of network-invariant modes, and hence controllability of the network.  For example, in Example $1$, the full network model \eqref{full} becomes controllable when the two columns of $B$ are
switched (i.e., $B=\begin{bmatrix} 0 & 1 \\ 1 & 0 \\ 0 & 0 \end{bmatrix}$).

The linearly-coupled network may be expected to have network-invariant modes other than the subsystem decentralized fixed modes, since the matrix $A+\lambda BC$ has an even more
restricted form than allowed by application of decentralized feedback (specifically, the decentralized feedback gains applied to each channel must be identical).  
The following example shows that the model may have
network-invariant modes even when it does not have 
decentralized fixed modes:

\begin{example}
Consider a linearly-coupled network model with: 
$A=\begin{bmatrix} 0 & 0 & 1 \\ 0 & 0 & 1 \\ 1 & -1 & 1 \end{bmatrix}$, $B=\begin{bmatrix} 1 & 0 \\ 0 & 1 \\ 
0 & 0 \end{bmatrix}$, 
$C=\begin{bmatrix} 1 & 0 & 0 \\ 0 & 1 & 0 \end{bmatrix}$,
 For this model, it can be verified that the local subsystem model
does not have decentralized fixed modes, by checking the eigenvalues of $A+BKC$ for two randomly-selected diagonal $K$
(see \cite{davison}).
However, the matrix $A+\lambda BC$ is equal to
$\begin{bmatrix} \lambda & 0 & 1 \\ 0 & \lambda & 1 \\
1 & -1 & 1 \end{bmatrix}$. For any complex $\lambda$,
this matrix has an eigenvalue at $\mu=1$ with corresponding
right eigenvector equal to $\begin{bmatrix} 1 \\ 1 \\ 1-\lambda
\end{bmatrix}$.  Thus, the linear-network model has a network-invariant 
mode at $\mu=1$ even though the subsystem model
does not have decentralized fixed modes.
\label{ex:2}
\end{example}

The above development has shown that network-invariant modes of the linear network model
are a superset of the decentralized fixed modes of the 
subsystem model (which are themselves a superset of
the unobservable and uncontrollable modes of the subsystem model).  

{\em Remark:} Scaling of
the subsystem inputs or outputs may also influence whether
network-invariant modes are present, and hence may alter the
controllability of the network.  For instance, in Example \ref{ex:2}, if
the $B$ matrix is changed to 
$B=\begin{bmatrix} 1 & 0 \\ 0 & 2 \\ 0 & 0 \end{bmatrix}$,
then the network-invariant mode is eliminated.

{\em Remark:} In the special case that the subsystem model has a single input and single output, the set of network-invariant modes are trivially seen to be precisely the unobservable and uncontrollable modes.

The presence of network-invariant modes necessarily prevents
controllability of the linearly-coupled network, under broad conditions on the global model $(G,S)$.  In particular,  if
a network-invariant mode is present, matrices
$A+\lambda_i BC$ share the mode for all $\lambda_i$ of $G$. Hence, 
this mode has large geometric multiplicity in the full linearly-coupled model \eqref{full} provided
that $G$ is diagonalizable.  Thus, the linearly-coupled network model
can be controllable only if actuation is provided at a sufficient
number of network channels, regardless of the network matrix $G$.  This concept is formalized
in the following lemma:

\begin{lemma}
Consider a linearly-coupled network model \eqref{full} with diagonalizable network
matrix $G$, which has a network-invariant mode. The model is controllable only if the number of actuation locations satisfies $M \ge \lceil \frac{N}{m} \rceil$.
\label{lem:struc}
\end{lemma}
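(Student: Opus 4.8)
The plan is to derive the bound as a direct consequence of the second statement in Lemma~\ref{lem:suff}, which asserts that the linearly-coupled network fails to be controllable whenever $Mm < P(\mu)$ for some complex $\mu$. Thus it suffices to exhibit a complex number $\mu$ with $P(\mu) \ge N$, and then read off the stated bound on $M$ by contraposition.

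First I would take $\mu$ to be the network-invariant mode guaranteed by hypothesis. By the definition of a network-invariant mode, $\mu$ is an eigenvalue of $A + \lambda BC$ for every complex $\lambda$; in particular, $\mu$ is an eigenvalue of each of the $N$ blocks $A + \lambda_i BC$, $i = 1, \dots, N$, where $\lambda_1, \dots, \lambda_N$ are the eigenvalues of $G$ listed with multiplicity. Hence the geometric multiplicity of $\mu$ in each block is at least $1$, and summing over the $N$ blocks gives $P(\mu) \ge N$. Here diagonalizability of $G$ is what licenses passing to the block-diagonal form $I_N \otimes A + \Lambda \otimes BC$ via the similarity $T^{-1} \otimes I_n$ with $G = T\Lambda T^{-1}$, so that the geometric multiplicity of $\mu$ in the full state matrix is exactly the sum of its geometric multiplicities across the blocks; note the argument is insensitive to whether the $\lambda_i$ are distinct, since they are counted with multiplicity.

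Combining these, if $Mm < N$ then $Mm < N \le P(\mu)$, so by Lemma~\ref{lem:suff} the network is not controllable. Equivalently, controllability forces $Mm \ge N$, i.e.\ $M \ge N/m$; since $M$ is a nonnegative integer, this is the same as $M \ge \lceil N/m \rceil$, which is the claim.

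There is no genuine obstacle here: the only point requiring a moment's care is the additivity of geometric multiplicities across the similarity-transformed block-diagonal form, which relies essentially on the diagonalizability hypothesis on $G$. Alternatively, one could bypass Lemma~\ref{lem:suff} and argue directly via the PBH (Hautus) test: the left eigenspace of $I_N \otimes A + G \otimes BC$ at $\mu$ has dimension $P(\mu) \ge N$, while $S \otimes B$ has only $Mm$ columns, so if $N > Mm$ the linear map $\mathbf{w} \mapsto \mathbf{w}^T (S \otimes B)$ restricted to that eigenspace has nontrivial kernel, yielding a nonzero left eigenvector that annihilates all inputs and hence violates controllability — recovering the same bound.
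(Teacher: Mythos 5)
Your proposal is correct and follows essentially the same route as the paper's own proof: identify the network-invariant mode $\mu$, use diagonalizability of $G$ to conclude its geometric multiplicity in the full state matrix is at least $N$ (i.e.\ $P(\mu)\ge N$), and invoke the second part of Lemma~\ref{lem:suff} to force $Mm\ge N$. Your additional remarks on block-diagonalization and the direct PBH argument simply fill in details the paper leaves implicit.
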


\begin{proof}
For diagonalizable $G$, the geometric multiplicity of
the network-invariant mode is at least $N$.  Thus, 
from Lemma \ref{lem:suff}, controllability requires
that $Mm \ge N$.  The result follows. $\square$
\end{proof}

Lemma \ref{lem:struc} shows that controllability requires actuation at a specified fraction of the subsystems (at least), whenever network-invariant modes are present.  This is true regardless of the graph $\Gamma$ of the linearly-coupled network. 
Controllability {\em cannot} be achieved through design
of the network topology or selection of particular
actuation locations, if the number of actuation locations is insufficient.

Lemma \ref{lem:struc} can be further refined to show that
any weakly-connected partition of the network must have a sufficient
number of actuated nodes for controllability:

\begin{theorem}
Consider a linearly-coupled network with diagonalizable network matrix $G$ which has a network-invariant mode.  Consider any subset ${\cal T}$ of the network's subsystems (correspondingly, vertices in the graph $\Gamma$).  The linearly-coupled network is controllable only
if $\widehat{M}\ge \lceil 
\frac{\widehat{N}}{m}\rceil-b$, where $\widehat{N}$ is the total number of vertices within ${\cal T}$, $\widehat{M}$ is the 
number of actuation locations within ${\cal T}$, and $b$ is the number of vertices
in ${\cal T}$ which  are not actuation locations but have
an incoming edge from vertices outside ${\cal T}$ in $\Gamma$. \label{th:graph}
\end{theorem}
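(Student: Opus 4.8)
The plan is to localize the mechanism behind Lemma \ref{lem:struc}: restrict attention to the states of the subsystems in $\mathcal{T}$, absorb the couplings that enter $\mathcal{T}$ from outside into an enlarged set of actuation channels, and then apply the PBH test at the network-invariant mode $\mu$. Concretely, I would relabel the subsystems so that $\mathcal{T}$ occupies the first $\widehat{N}$ indices, giving $G=\begin{bmatrix} G_{11} & G_{12}\\ G_{21} & G_{22}\end{bmatrix}$ with $G_{11}\in\R^{\widehat{N}\times\widehat{N}}$, where $G_{12}$ has nonzero rows only at the vertices of $\mathcal{T}$ that receive an edge from outside $\mathcal{T}$. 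Partitioning the state matrix $I_{N}\otimes A+G\otimes BC$ conformally produces a block form with leading block $\mathcal{A}_{11}=I_{\widehat{N}}\otimes A+G_{11}\otimes BC$ and off-diagonal block $G_{12}\otimes BC$, while $S\otimes B$ has leading block $S_{1}\otimes B$, with $S_{1}$ indicating the $\widehat{M}$ actuated vertices of $\mathcal{T}$.

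The quantitative heart of the argument is that the columns of $S_{1}\otimes B$ together with those of $G_{12}\otimes BC$ all lie in the span of the vectors $e_{i}\otimes B\mathbf{y}$, $\mathbf{y}\in\R^{m}$, as $i$ ranges over the vertices of $\mathcal{T}$ that are actuated or receive an edge from outside; there are exactly $\widehat{M}+b$ such vertices (the $\widehat{M}$ actuated ones and the $b$ non-actuated ones with an incoming edge), so this span has dimension at most $(\widehat{M}+b)m$. Now suppose $(\widehat{M}+b)m<\widehat{N}$. If the network-invariant mode $\mu$ has geometric multiplicity at least $\widehat{N}$ in $\mathcal{A}_{11}$, then the left null space of $\mathcal{A}_{11}-\mu I$ has dimension at least $\widehat{N}>(\widehat{M}+b)m$, so a dimension count yields a nonzero row vector $\mathbf{z}_{1}^{T}$ in that null space which simultaneously annihilates every column of $G_{12}\otimes BC$ and of $S_{1}\otimes B$. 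Then $\begin{bmatrix}\mathbf{z}_{1}^{T}&\mathbf{0}\end{bmatrix}$ is a left eigenvector of $I_{N}\otimes A+G\otimes BC$ at $\mu$ that annihilates $S\otimes B$, so by the PBH test the network is not controllable. Taking the contrapositive, controllability forces $(\widehat{M}+b)m\ge\widehat{N}$; dividing by $m$ and using that $\widehat{M}+b$ is an integer gives $\widehat{M}+b\ge\lceil\widehat{N}/m\rceil$, i.e.\ $\widehat{M}\ge\lceil\widehat{N}/m\rceil-b$, which is the claim.

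The hard part is the multiplicity statement: a network-invariant mode $\mu$ has geometric multiplicity at least $\widehat{N}$ in $\mathcal{A}_{11}=I_{\widehat{N}}\otimes A+G_{11}\otimes BC$. This does not follow from the proof of Lemma \ref{lem:struc}, since $G_{11}$ is only a principal submatrix of $G$ and may be defective even when $G$ is diagonalizable, so one cannot merely count one copy of $\mu$ for each eigenvalue of $G_{11}$. I would argue with a matrix-pencil (generalized-eigenvalue) construction. Because $\mu$ is network-invariant, $\det(A+\lambda BC-\mu I)$ vanishes identically in $\lambda$, so the pencil $A+\lambda BC-\mu I$ is singular over the field of rational functions and has a nonzero polynomial vector $\mathbf{p}(\lambda)=\sum_{j=0}^{d}\lambda^{j}\mathbf{p}_{j}$ in its kernel; dividing out any common factor of its entries, we may assume $\mathbf{p}(\lambda)\neq\mathbf{0}$ for every complex $\lambda$. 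Matching coefficients in $(A+\lambda BC-\mu I)\mathbf{p}(\lambda)=\mathbf{0}$ gives $(A-\mu I)\mathbf{p}_{k}=-BC\,\mathbf{p}_{k-1}$ for all $k$ (with $\mathbf{p}_{-1}=\mathbf{0}$) together with $BC\,\mathbf{p}_{d}=\mathbf{0}$. A direct computation with these identities shows that for every $\mathbf{q}\in\R^{\widehat{N}}$ the vector $\mathbf{z}_{\mathbf{q}}=\sum_{j=0}^{d}\bigl(G_{11}^{\,j}\mathbf{q}\bigr)\otimes\mathbf{p}_{j}$ satisfies $(\mathcal{A}_{11}-\mu I)\mathbf{z}_{\mathbf{q}}=\mathbf{0}$. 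Finally, $\mathbf{q}\mapsto\mathbf{z}_{\mathbf{q}}$ is injective: since $\mathbf{p}(\lambda_{k})\neq\mathbf{0}$ at each of the finitely many eigenvalues $\lambda_{k}$ of $G_{11}$, one can pick $\mathbf{v}$ with $\mathbf{v}^{T}\mathbf{p}(\lambda_{k})\neq 0$ for all $k$, and then $(I_{\widehat{N}}\otimes\mathbf{v}^{T})\mathbf{z}_{\mathbf{q}}=\bigl(\sum_{j}(\mathbf{v}^{T}\mathbf{p}_{j})G_{11}^{\,j}\bigr)\mathbf{q}$ with the bracketed polynomial in $G_{11}$ invertible, so $\mathbf{z}_{\mathbf{q}}=\mathbf{0}$ forces $\mathbf{q}=\mathbf{0}$. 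Hence $\dim\ker(\mathcal{A}_{11}-\mu I)\ge\widehat{N}$, and since the matrix is square its left null space has the same dimension.

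Two checks would round things off: taking $\mathcal{T}$ to be the whole network gives $b=0$, $\widehat{N}=N$, $\widehat{M}=M$ and recovers Lemma \ref{lem:struc}; and a single non-actuated subsystem with no incoming edge is flagged uncontrollable, consistent with its state evolving as $\dot{\mathbf{x}}_{i}=(A+g_{ii}BC)\mathbf{x}_{i}$, which has $\mu$ in its spectrum.
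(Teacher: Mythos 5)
Your proposal is correct, and at the top level it follows the same route as the paper: relabel so that ${\cal T}$ occupies the first $\widehat{N}$ indices, absorb the couplings entering ${\cal T}$ from outside into an enlarged set of $\widehat{M}+b$ input channels, and then show that the network-invariant mode $\mu$ has geometric multiplicity at least $\widehat{N}$ in $I_{\widehat{N}} \otimes A + G_{11}\otimes BC$, which forces $(\widehat{M}+b)m \ge \widehat{N}$. Where you genuinely diverge is in how that multiplicity bound is established. The paper simply declares the restricted system to be ``a modified linearly-coupled network model'' with network matrix $\widehat{G}=G_{11}$ and invokes Lemma \ref{lem:struc} — but Lemma \ref{lem:struc} requires a diagonalizable network matrix, and a principal submatrix of a diagonalizable $G$ can be defective (e.g.\ the leading $2\times 2$ block of a $3\times 3$ companion-type matrix with distinct eigenvalues is a Jordan block). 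Your pencil construction closes exactly this gap: from $\det(A+\lambda BC-\mu I)\equiv 0$ you extract a polynomial kernel vector ${\bf p}(\lambda)$ with coprime entries, verify via the coefficient identities $(A-\mu I){\bf p}_k=-BC{\bf p}_{k-1}$ and $BC{\bf p}_d=0$ that ${\bf z}_{\bf q}=\sum_j (G_{11}^j{\bf q})\otimes{\bf p}_j$ is annihilated by ${\cal A}_{11}-\mu I$, and prove injectivity of ${\bf q}\mapsto{\bf z}_{\bf q}$ by choosing ${\bf v}$ off the finitely many hyperplanes ${\bf v}^T{\bf p}(\lambda_k)=0$ so that $f(G_{11})=\sum_j({\bf v}^T{\bf p}_j)G_{11}^j$ is invertible. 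This works for arbitrary $G_{11}$ and therefore proves the theorem without any hidden hypothesis on the submatrix. Your direct PBH argument — extending the left null vector by zeros to kill both $G_{12}\otimes BC$ and $S\otimes B$ — is also cleaner than the paper's informal ``controllability of the restricted system with the projected signal as input is necessary'' step. In short: same decomposition, but your version of the key multiplicity lemma is strictly stronger than the paper's and is actually needed for the theorem as stated.
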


\begin{proof}
Without loss of generality, we assume that ${\cal T}$ contains subsystems $1, \hdots, \widehat{N}$.  The dynamics
of the subsystems within within ${\cal T}$ can be expressed
as
\begin{equation}
\dot{\widehat{\bf x}}=(I_{\widehat{N}} \otimes A
+ \widehat{G} \otimes BC) \widehat{\bf x}
+\widehat{S} \otimes B \widehat{\bf u} . \label{eq:proof}
\end{equation}
In Equation \ref{eq:proof}, $I_{\widehat{N}}$ is an identity
matrix with $\widehat{N}$ rows; $\widehat{G}$ is the principal submatrix of $G$ containing the first $\widehat{N}$ rows and columns; and $\widehat{S}$ is a matrix with $\widehat{N}$ rows and $\widehat{M}+b$ columns, where the columns are 0--1 indicator vector of the $\widehat{M}$ actuation locations and the $z$ additional vertices which
have incoming edges from vertices outside ${\cal T}$.  
The vector $\widehat{\bf u}$ concatenates vectors
$\widehat{\bf u}_i$ corresponding to the $\widehat{M}+z$ 
subsystems indicated by $\widehat{S}$, where each
$\widehat{\bf u}_i$ is the sum of the external input signal
${\bf u}_i$ at the subsystem (if there is an input ${\bf u}_i$ at this subsystem) and a signal projected
from the network outside ${\cal T}$ (specifically,
$\sum_{j \notin {\cal T}} g_{ij} C {\bf x}_j$).  

Controllability of the linearly-coupled network requires controllability of the system \eqref{eq:proof}, where
$\widehat{\bf u}$ is considered as an input signal in \eqref{eq:proof}.  This is because, if system 
\eqref{eq:proof} is not controllable, there is at least
one state $\widehat{\bf x}=\overline{\widehat{ {\bf x}}}$ 
that cannot be achieved via the applied input $\widehat{\bf u}$.  Thus, in this case, the linearly-coupled network model \eqref{full}
also necessarily cannot be driven to an arbitrary
state.  Thus, controllability of \eqref{eq:proof} is necessary for controllability of \eqref{full}.

However, the system \eqref{eq:proof} is a modified linearly-coupled
network model, which has network matrix $\widehat{G}$ rather than $G$, and actuation-location matrix $\widehat{S}$ rather
than $S$.  This modified model is immediately seen to have 
the same network-invariant mode as the original network.  Thus, from Lemma \ref{lem:struc}, 
it follows that $\widehat{M}+z\ge \lceil 
\frac{\widehat{N}}{m}\rceil$, and the theorem statement follows. $\square$
\end{proof}

Theorem \ref{th:graph} shows that 
linearly-coupled networks with network-invariant modes
must have a sufficient density of actuation in all parts of
the network for controllability.  That is, not only is 
actuation required at a certain fraction of the subsystems overall, but each partition of the network requires actuation at a sufficient number of contained subsystems.

For a broad subclass of linear network models with 
network-invariant modes, the requirement for controllability
is even more stringent.  In particular, network-invariant modes often have the characteristic that the eigenvector
of $A+\lambda BC$ associated with the mode has the same
projection on the subsystem input matrix for every $\lambda$.
Formally, let us define $\overline{\bf w}(\lambda)^T$ to be the left
eigenvector of $A+\lambda BC$ associated with a particular
network-invariant mode $\mu$.  We say that the network-invariant mode $\mu$ is {\em projection-fixed}, if the product
$\overline{\bf w}(\lambda)^T B$ is identical for all $\lambda$, to within a multiplicative factor (i.e. $\overline{\bf w}(\lambda)^T B=\alpha({\lambda}){\bf p}$, where ${\bf p}$ is a common row vector for all $\lambda$ and $\alpha({\lambda})$ is a scalar multiplicative factor that depends on $\lambda$).  The following lemma shows
that a linear network model with subsystem-projection-fixed network-invariant modes can only be controllable if 
actuation is provided at all network nodes:

\begin{lemma}
Consider a linearly-coupled network \eqref{full} with diagonalizable $G$.  If the model has any network-invariant modes which are projection-fixed, then the linear network model is controllable only if 
actuation is provided at all network nodes ($M=N$).  
\end{lemma}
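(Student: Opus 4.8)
The plan is to obtain the claim in contrapositive form from the modal test of Lemma \ref{th:neccsuff}: assuming $M<N$, I would exhibit scalar coefficients, not all zero, for which the test quantity associated with a projection-fixed network-invariant mode vanishes. Throughout, let $\mu$ be such a mode, and for each complex $\lambda$ let $\overline{\bf w}(\lambda)^T$ be the associated left eigenvector of $A+\lambda BC$, so that $\overline{\bf w}(\lambda)^T B=\alpha(\lambda){\bf p}$ for a common row vector ${\bf p}$.

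First I would dispose of two degenerate sub-cases. If ${\bf p}={\bf 0}$, or if $\alpha(\lambda_j)=0$ for some eigenvalue $\lambda_j$ of $G$, then $\overline{\bf w}(\lambda_j)^T$ is a left eigenvector of $A+\lambda_j BC$ at $\mu$ that is annihilated by $B$, so $(A+\lambda_j BC,B)$ is uncontrollable and, by Lemma \ref{lem:necc}, so is the full network; the claim then holds vacuously for every $M$. Hence I may assume ${\bf p}\neq{\bf 0}$ and $\alpha(\lambda_j)\neq 0$ for all $j$. Moreover, since $\mu$ is network-invariant it is an eigenvalue of every block $A+\lambda_j BC$, hence one of the distinct eigenvalues $\mu_i$ of \eqref{full}, so Lemma \ref{th:neccsuff} applies to it and each block contributes at least the left eigenvector ${\bf w}_j^T:=\overline{\bf w}(\lambda_j)^T$.

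The main step is then a short computation combined with a dimension count. In the test of Lemma \ref{th:neccsuff} for $\mu$, retain for each $j$ only the coefficient $a_j$ multiplying ${\bf w}_j^T$ (set all remaining $a_{j,l}$ to zero). Using $\overline{\bf w}(\lambda_j)^T B=\alpha(\lambda_j){\bf p}$ and bilinearity of the Kronecker product, the test quantity becomes
\[
\sum_{j=1}^N a_j\,({\bf v}_j^T S)\otimes({\bf w}_j^T B)\;=\;\Bigl(\sum_{j=1}^N a_j\,\alpha(\lambda_j)\,{\bf v}_j^T S\Bigr)\otimes{\bf p},
\]
which, because ${\bf p}\neq{\bf 0}$, is zero exactly when $\sum_{j=1}^N a_j\alpha(\lambda_j)\,{\bf v}_j^T S={\bf 0}$. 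The $N$ row vectors ${\bf v}_1^T S,\dots,{\bf v}_N^T S$ lie in the $M$-dimensional space $R^{1\times M}$; when $M<N$ they are linearly dependent, so there are scalars $b_1,\dots,b_N$, not all zero, with $\sum_j b_j\,{\bf v}_j^T S={\bf 0}$. Taking $a_j:=b_j/\alpha(\lambda_j)$ yields coefficients that are not all zero and make the test quantity vanish, so by Lemma \ref{th:neccsuff} the network is not controllable. Contrapositively, controllability forces $M\geq N$; since the columns of $S$ index distinct subsystems we always have $M\leq N$, whence $M=N$.

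The argument is essentially routine once set up; the only delicate point I anticipate is the preliminary bookkeeping — correctly isolating the sub-cases ${\bf p}={\bf 0}$ and $\alpha(\lambda_j)=0$, and verifying that the constructed $a_j$ form an admissible (not-all-zero) choice in Lemma \ref{th:neccsuff} — since the core is merely that $N$ vectors cannot be independent in an $M$-dimensional space when $M<N$. If one insists on real coefficients when $\mu$ or the $\lambda_j$ are complex, the linear dependence can instead be extracted after pairing complex-conjugate eigendata, which does not affect the count.
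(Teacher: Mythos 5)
Your proposal is correct and follows essentially the same route as the paper's own proof: apply Lemma \ref{th:neccsuff} to the projection-fixed mode, factor the common row vector ${\bf p}$ out of the Kronecker products, and conclude from a dimension count on the vectors ${\bf v}_j^T S$ that $M\ge N$. Your explicit handling of the degenerate sub-cases (${\bf p}={\bf 0}$ or $\alpha(\lambda_j)=0$) and of the division by $\alpha(\lambda_j)$ is a welcome tightening of the same argument, not a different one.
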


\begin{proof}
We apply Lemma \ref{th:neccsuff} to the projection-fixed 
network-invariant mode, say $\mu$, to show necessity.  In particular, the linearly-coupled network is controllable only if: 
$\sum_{j=1}^N \sum_{l=1}^{\beta_{j}(\mu)} \alpha_{j,l} ({\bf v}_j^T S) \otimes
({\bf w}_{jl}^T B) \neq 0$
for any real scalar coefficients $\alpha_{j,l}$  that are not all identically $0$. Since $\mu$ is a network-invariant mode, 
the mode's multiplicity $\beta_j(\mu)$ in each block 
$A+\lambda_j BC$ is necessarily at least $1$. Meanwhile, since
the mode is projection-fixed, ${\bf w}_{jl}^T B$ is
proportional to some row vector ${\bf p}$ for all $j$ and $l$.  
Thus, the necessary condition for controllability can be simplified to
$\sum_{j=1}^N \alpha_j ({\bf v}_j^T S) \otimes {\bf p} 
\neq 0$ for all $\alpha_j$.  However, this is only possible
of $S$ has $N$ independent columns, which necessitates that $M=N$. $\square$
\label{th:projfixed}
\end{proof}

It is easy to check that the network-invariant modes in Examples 1 and 2 are both projection-fixed, and hence actuation at all network locations is necessary for controllability.  Although many network-invariant modes are projection-fixed, not all are.  The following is an example of a linearly-coupled network
with a network-invariant mode that is not projection-fixed:

\begin{example}
Consider a linear network model with: 
$A=\begin{bmatrix} 0.6 & 0.4 & 0 \\ 0.2 & 0.7 & 0.1 \\
0 & 0.2 & 0.8 \end{bmatrix}$, 
$B=\begin{bmatrix}  1 & 0 & 1 & 0 \\ 0 & 1 & 0 &1 \\ 
0 & 0 & 0 & 0 \end{bmatrix}$,
$C=\begin{bmatrix} 1 & 0 & 0 \\ 0 & -1 & 0 \\ 0 & -1 & 0 \\
1 & 0 & 0 \end{bmatrix}$, 
$G=\begin{bmatrix} 1 & -1 & 0 \\ -1 & 2 & -1 \\ 0 & -1 & 1
\end{bmatrix}$, and ${\cal S}=\{ 1, 2 \}$.  
It can be checked that the model has a network-invariant
mode at $\mu =1$ that is not projection-fixed.  The linear
network model is found to be controllable, even though actuation has only been provided at two of the three subsystems.
\end{example}

The above analysis shows that linearly-coupled  networks
may have network-invariant modes, which can place essential limits on controllability regardless of the topology of the network.  Thus, controllability of built networks (i.e., linearly-coupled networks with fixed interfaces) requires additional conditions on the subsystems beyond local controllability and observability. For instance, (centralized) controllability of the linearly-coupled network model 
is tied to decentralized controllability of the subsystem model. 

These barriers to controllability
seem surprising at first glance, since the controllability
of multi-agent system models has been distilled to a purely graph-theoretic condition in the literature \cite{control5}.  This
difference arises because multi-agent system models allow design of the interaction protocols
among the agents, in contrast with built networks.  The difference can be explained within our framework by considering the linearly-coupled network model with designable $C$ matrix, which captures the multi-agent-system context:

\begin{lemma}
Consider a linearly-coupled network model for which the subsystem's output matrix takes the form
$C=H \widehat{C}$, where $H$ is designable.  Provided that
the subsystem model $(\widehat{C}, A, B)$ is observable
and controllable, the matrix $H$ can be designed so that 
the linear network model has no network-invariant modes.
\end{lemma}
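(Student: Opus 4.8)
The plan is to prove the lemma by a genericity argument: I will exhibit a measure-zero ``bad'' set of design matrices $H$ outside of which the resulting network has no network-invariant modes. The first observation is that the candidate network-invariant modes are sharply restricted. If $\mu$ is a network-invariant mode of the network with output matrix $C = H\widehat C$, then by definition $\mu \in \sigma\!\left(A + \lambda B H \widehat C\right)$ for every complex $\lambda$; setting $\lambda = 0$ forces $\mu$ to be an eigenvalue of $A$. Hence it suffices to choose $H$ so that \emph{no} eigenvalue of $A$ is a network-invariant mode, and since $A$ has at most $n$ distinct eigenvalues $\nu_1,\dots,\nu_r$, I can handle them one at a time and then intersect the resulting conditions on $H$.

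Next, I would produce, for each eigenvalue $\nu_i$ of $A$, a single design matrix that eliminates that mode. Since $(A,B)$ is controllable and $(A,\widehat C)$ is observable, the triple $(\widehat C, A, B)$ has no fixed modes under centralized static output feedback $u = Ky$ with $y = \widehat C\,{\bf x}$: by the classical characterization, that fixed-mode set is the union of the uncontrollable modes of $(A,B)$ and the unobservable modes of $(A,\widehat C)$ \cite{davison}, which is empty under our hypotheses. Consequently there is a real matrix $K_i$ (of the size of a putative $H$) with $\nu_i \notin \sigma\!\left(A + B K_i \widehat C\right)$. Taking $H = K_i$ and $\lambda = 1$ then shows that, for this choice of $H$, $\nu_i$ fails to lie in $\sigma(A + \lambda B H \widehat C)$ for at least one value of $\lambda$; that is, $\nu_i$ is not a network-invariant mode of the network built from $H = K_i$.

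To turn these pointwise witnesses into a statement about a generic $H$, I would argue as follows. Expanding $\det\!\left(\nu_i I - A - \lambda B H \widehat C\right) = \sum_{k=0}^{n} p_{i,k}(H)\,\lambda^{k}$, each coefficient $p_{i,k}(H)$ is a polynomial in the entries of $H$ (the entries of $A + \lambda B H \widehat C$ are affine in $\lambda$ and, separately, affine in those of $H$, so the determinant is polynomial in $\lambda$ and $H$ jointly). For a fixed $H$, $\nu_i$ is a network-invariant mode exactly when this polynomial in $\lambda$ vanishes identically, i.e. when $p_{i,k}(H) = 0$ for all $k$; thus the ``bad'' set $\mathcal{N}_i := \{H : \nu_i \text{ is a network-invariant mode of the network built from } H\}$ is a real algebraic variety. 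By the previous step $\mathcal{N}_i$ is a \emph{proper} subvariety, since it does not contain $K_i$, so it has Lebesgue measure zero; hence so does $\bigcup_{i=1}^{r}\mathcal{N}_i$. Any $H$ in the (dense) complement makes no eigenvalue of $A$ a network-invariant mode, and therefore, by the first step, the network built from $H$ has no network-invariant modes at all, which proves the lemma.

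The step I expect to need the most care is the appeal to the classical fixed-mode fact in the second paragraph. If a self-contained argument is preferred, one proves it mode-by-mode with a first-order eigenvalue perturbation: for a simple eigenvalue $\nu_i$ of $A$ with left and right eigenvectors $w_i^T$ and $v_i$, controllability gives $w_i^T B \neq 0$ and observability gives $\widehat C v_i \neq 0$, so a rank-one $K$ can be chosen so that the first-order perturbation $w_i^T B K \widehat C v_i /(w_i^T v_i)$ is nonzero and the eigenvalue moves off $\nu_i$ at small gains; the repeated or defective case uses the full generalized eigenspace and causes no essential difficulty. One should also confirm that $H$ is unconstrained apart from the factorization $C = H\widehat C$, so that it may indeed be chosen generically among real matrices of the appropriate dimensions.
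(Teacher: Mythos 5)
Your proof is correct and follows essentially the same route as the paper's: compare $\lambda=0$ (which pins every candidate network-invariant mode to the spectrum of $A$) with $\lambda=1$, and use controllability and observability of $(\widehat{C},A,B)$ to choose a static output feedback $H$ that moves the eigenvalues of $A$. The only difference is that you rigorously justify, via a union-of-proper-algebraic-varieties genericity argument, the step the paper merely asserts---that a single $H$ can simultaneously move \emph{all} eigenvalues of $A$ away from themselves---which is a useful tightening rather than a different approach.
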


\begin{proof}
Consider the eigenvalues of $A+\lambda BC=A+\lambda
B H \widehat{C}$ for $\lambda=0$ and $\lambda=1$.  For
$\lambda=0$, the eigenvalues are those of $A$.  For $\lambda=1$, the eigenvalues are those of $A+BH \widehat{C}$.
Since $(\widehat{C}, A, B)$ is observable
and controllable, $H$ can be chosen so that the eigenvalues
of $A+BH \widehat{C}$ all differ from those of the matrix $A$ (i.e., a feedback can be applied to move all eigenvalues).
For such a choice of $H$, $A+\lambda BC$ does not
share any eigenvalues for $\lambda=0$ and $\lambda=1$, hence
the model has no network-invariant modes.  $\square$
\end{proof}

The lemma formalizes that multi-agent-system protocols can 
always be designed to avoid network-invariant modes, by choosing any matrix $H$ such that the eigenvalues
of $A+B H \widehat{C}$ differ from those of $A$.

\subsection{Scenarios with Special-Repeat Modes}

Even if the linear network model does not have 
network-invariant modes, the blocks $A+\lambda_i BC$ may
share eigenvalues for particular
$\lambda_i$.  These repeated eigenvalues putatively may also
interfere with controllability, since their associated eigenvectors across the blocks may lie in the null space
of the input matrix, per Lemma \ref{th:neccsuff}.
To gain an understanding of whether the blocks $A+\lambda_i BC$ can share eigenvalues for special choices of $\lambda_i$, we consider solving for the $\lambda$ for which $A+\lambda BC$ has a particular eigenvalue $\mu$.  For such $\lambda$, 
the equation ${\bf w}^T (A+\lambda BC) =\mu {\bf w}^T$ must
be satisfied for some ${\bf w}^T$.  Rearranging, the
equation can be written as:
\begin{equation}
{\bf w}^T (A- \mu I) =\lambda {\bf w}^T (-BC) \label{eq:geneig}
\end{equation}  

Equation \ref{eq:geneig} clarifies that finding the
$\lambda$ values that yield a particular eigenvalue $\mu$
corresponds to solving a {\em generalized eigenvalue problem} for the pair $(A- \mu I, -BC)$.  From standard results on generalized eigenvalue
problems \cite{geneig1,geneig2}, only the following scenarios are
possible:
\begin{itemize}
\item  The generalized eigenvalue problem is degenerate, in the sense
that it has a solution for every $\lambda$ (i.e., there 
is vector in the left-null-space of $A+\lambda BC-\mu I$
from every $\lambda$). This corresponds to the case that
 that the linear network model has network-invariant modes.
\item  Alternately, the generalized eigenvalue problem
has a solution for a finite set of $\lambda$, with the number of distinct solutions less than or equal to the rank of the matrix $BC$.  We refer to eigenvalues $\mu$ that only
repeat for a finite set of $\lambda$ as special-repeat 
modes, and refer to the corresponding set of distinct $\lambda$ values
as the {\em network-repeat set for $\mu$ (or $NR(\mu)$}).   
\end{itemize}

The generalized-eigenvalue formulation leads
to the following characterization of the
special-repeat modes and network-repeat sets:
\begin{lemma}
Consider an eigenvalue $\mu$ that is a special-repeat mode of a linearly-coupled network \eqref{full}.
The maximum number of distinct $\lambda$ for which 
$A+\lambda BC$ has eigenvalue $\mu$ (i.e., the maximum possible size for the network-repeat set $NR(\mu)$) is at most $rank(BC)$.  Further,
consider the set of vectors ${\bf w}^T$ for which ${\bf w}^T (A+\lambda BC) =\mu {\bf w}^T$, for $\lambda \in NR(\mu)$.  There are $r \le rank(BC)$ such vectors, say
${\bf w}_1^T,\hdots, {\bf w}_r^T$, which are linearly independent.
Further, the projections of these vectors into the matrix $B$ are
linearly independent, i.e. ${\bf w}_1^T B,\hdots, {\bf w}_r^T B$
are linearly independent.
\label{lem:repeat}
\end{lemma}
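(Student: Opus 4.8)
The plan is to reduce everything to the spectral structure of a single $\rho\times\rho$ matrix, where $\rho=\mathrm{rank}(BC)$, obtained from the equation ${\bf w}^T(A+\lambda BC-\mu I)=0$ by a matrix-inversion manipulation. To set up notation: since $\mu$ is a special-repeat mode, $\det(A+\lambda BC-\mu I)$ is a polynomial in $\lambda$ that is not identically zero, so $NR(\mu)$ is finite and I may fix a scalar $\lambda^{\ast}\notin NR(\mu)$, whereupon $E:=A+\lambda^{\ast}BC-\mu I$ is invertible. I would take a rank factorization $BC=\widehat{B}\widehat{C}$ with $\widehat{B}\in\mathbb{C}^{n\times\rho}$ and $\widehat{C}\in\mathbb{C}^{\rho\times n}$ both of rank $\rho$; because the columns of $\widehat{B}$ span the column space of $BC$, which lies in the column space of $B$, we can write $\widehat{B}=BM_1$, so that ${\bf w}^T B=0$ implies ${\bf w}^T\widehat{B}=0$. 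Writing $\lambda=\lambda^{\ast}+\nu$, the defining equation becomes ${\bf w}^T(E+\nu\widehat{B}\widehat{C})=0$, and the key object is the $\rho\times\rho$ matrix $R:=\widehat{C}E^{-1}\widehat{B}$.

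The central step is an eigenvector correspondence: for $\nu\neq0$, a nonzero ${\bf w}^T$ satisfies ${\bf w}^T(E+\nu\widehat{B}\widehat{C})=0$ if and only if, setting ${\bf q}^T:={\bf w}^T\widehat{B}$, we have ${\bf q}^T\neq0$, ${\bf q}^T R=-\tfrac{1}{\nu}{\bf q}^T$, and ${\bf w}^T=-\nu\,{\bf q}^T\widehat{C}E^{-1}$. The forward implication follows by solving ${\bf w}^T E=-\nu{\bf q}^T\widehat{C}$ for ${\bf w}^T$ using invertibility of $E$ and substituting back (with ${\bf q}^T\neq0$ because ${\bf w}^T\widehat{B}=0$ would force ${\bf w}^T E=0$, hence ${\bf w}^T=0$); the converse is a direct verification, which also yields ${\bf w}^T\widehat{B}={\bf q}^T$ along the correspondence. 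A consequence is that $\theta\mapsto\lambda^{\ast}-1/\theta$ is a bijection from the nonzero eigenvalues $\theta$ of $R$ onto $NR(\mu)$, so $|NR(\mu)|\le\rho=\mathrm{rank}(BC)$, which is the first assertion.

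For the remaining assertions I would enumerate $NR(\mu)=\{\lambda_1,\dots,\lambda_p\}$, set $\nu_k=\lambda_k-\lambda^{\ast}$ and $\theta_k=-1/\nu_k$ (these being distinct nonzero eigenvalues of $R$, with left-eigenspaces $\mathcal{Q}_k$), and let $\mathcal{W}_k$ be the left-eigenspace of $A+\lambda_k BC$ at $\mu$. The correspondence gives $\mathcal{W}_k=\Phi(\mathcal{Q}_k)$, where $\Phi({\bf q}^T):={\bf q}^T\widehat{C}E^{-1}$ is a fixed \emph{injective} linear map (injective because $\widehat{C}$ has full row rank and $E$ is invertible). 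Since eigenspaces of $R$ attached to distinct eigenvalues are linearly independent, $\sum_k\mathcal{Q}_k=\bigoplus_k\mathcal{Q}_k$ has dimension at most $\rho$, and applying $\Phi$ shows that the span $V$ of all the relevant vectors ${\bf w}^T$ equals $\Phi\!\left(\bigoplus_k\mathcal{Q}_k\right)$ and has dimension $r\le\rho$; any basis of $V$ then supplies the promised $r\le\mathrm{rank}(BC)$ linearly independent vectors ${\bf w}_1^T,\dots,{\bf w}_r^T$. To prove their $B$-projections are linearly independent it suffices to show ${\bf w}^T\mapsto{\bf w}^T B$ is injective on $V$: if ${\bf w}^T\in V$ and ${\bf w}^T B=0$ then ${\bf w}^T\widehat{B}=0$, and writing ${\bf w}^T=\sum_k{\bf w}_k^T$ with ${\bf w}_k^T\in\mathcal{W}_k$ and using ${\bf w}_k^T\widehat{B}={\bf q}_k^T\in\mathcal{Q}_k$ gives $\sum_k{\bf q}_k^T=0$ with the summands in independent subspaces, so every ${\bf q}_k^T=0$, hence every ${\bf w}_k^T=0$ and ${\bf w}^T=0$.

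The step I expect to be the main obstacle is stating and proving the eigenvector correspondence cleanly — in particular handling the case in which $\mu$ is an eigenvalue of $A$ (so $0\in NR(\mu)$) uniformly, which is the reason for shifting by a generic $\lambda^{\ast}$ rather than working with $A-\mu I$ directly, and keeping careful track of the fact that ${\bf w}^T\mapsto{\bf w}^T B$ factors through ${\bf w}^T\mapsto{\bf w}^T\widehat{B}$, so that the $B$-projection claim genuinely reduces to independence of the eigenspaces $\mathcal{Q}_k$ of the single matrix $R$. Once that reduction is in place, the rest is only the elementary fact that eigenspaces of a $\rho\times\rho$ matrix for distinct eigenvalues are linearly independent and together span a subspace of dimension at most $\rho$.
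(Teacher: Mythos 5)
Your proof is correct, but it takes a genuinely different route from the paper's. The paper handles the first two claims (the bound $|NR(\mu)|\le \mathrm{rank}(BC)$ and the bound on the number of independent eigenvectors) by citing standard generalized-eigenvalue results, and then proves only the third claim --- linear independence of the projections ${\bf w}_i^T B$ --- by a direct contradiction argument: assuming $\sum_i \gamma_i {\bf w}_i^T B=0$, it forms the combination $\sum_i \frac{\gamma_i+c_i}{\lambda_i}{\bf w}_i^T$ and chooses $c_i=\frac{h\gamma_i}{\lambda_i-h}$ to exhibit $\mu$ as an eigenvalue of $A+hBC$ for a continuum of $h$, contradicting the special-repeat hypothesis. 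Your argument instead proves all three claims uniformly by a Weinstein--Aronszajn-type reduction: after shifting to an invertible $E=A+\lambda^{\ast}BC-\mu I$ and rank-factorizing $BC=\widehat{B}\widehat{C}$, the entire eigenstructure at $\mu$ is encoded in the $\rho\times\rho$ matrix $R=\widehat{C}E^{-1}\widehat{B}$, and everything (the count of $\lambda$'s, the count of independent ${\bf w}^T$'s, and the injectivity of ${\bf w}^T\mapsto{\bf w}^T B$ on their span) reduces to the elementary independence of eigenspaces of $R$ at distinct eigenvalues. What your approach buys: it is self-contained rather than relying on cited results; it cleanly sidesteps the division by $\lambda_i$ in the paper's equation (6), which is problematic when $0\in NR(\mu)$, i.e.\ when $\mu$ is an eigenvalue of $A$ itself; and it transparently handles multiple independent eigenvectors attached to the same $\lambda_k$. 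The paper's argument is shorter and needs no auxiliary construction. One cosmetic point: "any basis of $V$" need not consist of actual eigenvectors; you should take the union of bases of the individual eigenspaces $\mathcal{W}_k$ (which is a basis of $V=\bigoplus_k\mathcal{W}_k$) so that the exhibited ${\bf w}_1^T,\hdots,{\bf w}_r^T$ genuinely satisfy the eigenvector equations as the lemma requires; this does not affect the validity of the argument.
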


\begin{proof}
The number of the solutions $\lambda$, and the fact that there are
at most $rank(BC)$ vectors that
satisfy ${\bf w}^T (A+\lambda BC) =\mu {\bf w}^T$, are standard results 
on generalized eigenvalues \cite{geneig1}.  

It remains to prove
that the projections of the generalized eigenvectors ${\bf w}_1^T B,\hdots, {\bf w}_r^T B$ are linearly independent.  
We will prove the result by contradiction.  If the vectors are not
linearly independent, then there exist $\gamma_1,\hdots, \gamma_r$ which
are not identically zero such that $\sum_{i=1}^r \gamma_i {\bf w}_i^T B=0$.
Notice that this is only possible for $r \ge 2$. To continue, we scale and sum
the equations ${\bf w}_i^T(A+\lambda_i BC)=\mu {\bf w}_i^T$ for $i=1,\hdots, r$, where
$\lambda_i$ are the corresponding generalized eigenvalues in $NR(\mu)$.  In particular, we
can get:
\begin{equation}
\sum_{i=1}^r \frac{\gamma_i+c_i}{\lambda_i}{\bf w}_i^T
(A+\lambda_i BC)=\mu \sum_{i=1}^r \frac{\gamma_i+c_i}{\lambda_i}{\bf w}_i^T, \label{eq:lin}
\end{equation}
where $c_1,\hdots,c_r$ are scalars.  Multiplying out the left
side of \eqref{eq:lin} and then using $\sum_{i=1}^r \gamma_i {\bf w}_i^T B=0$, we 
get:
\begin{equation}
\sum_{i=1}^r \frac{\gamma_i+c_i}{\lambda_i}{\bf w}_i^T A+
\sum_{i=1}^r c_i {\bf w}_i^T BC=\mu \sum_{i=1}^r \frac{\gamma_i+c_i}{\lambda_i}{\bf w}_i^T, 
\end{equation}
Choosing $c_i=\frac{h \gamma_i}{\lambda_i-h}$ for any scalar $h \neq \lambda_i$, one recovers
that:
\begin{equation}
\sum_{i=1}^r c_i{\bf w}_i^T (A+hBC) = \mu \sum_{i=1}^r c_i{\bf w}_i^T .
\end{equation}
Thus, $\mu$ is seen to be an eigenvalue of $A+hBC$ for any scalar $h$, with 
corresponding eigenvector $\sum_{i=1}^r c_i{\bf w}_i^T$.  This contradicts the fact that 
$\mu$ is a special-repeat mode, since it is an eigenvalue for a continuum of $h$ rather than a finite set.
Thus, the result is proved by contradiction. $\square$
\end{proof}

Lemma \ref{lem:repeat} indicates the matrices $A+\lambda_i BC$ may have repeated eigenvalues across them, however the total multiplicity is limited by $rank(BC)$,
and further the corresponding eigenvectors' projections on $B$ are linearly independent.  In fact, these limitations on 
special-repeat modes' eigenspaces have implications on controllability of the linearly-coupled network model, as is formalized in the following
theorem:
\begin{theorem}
Consider a linearly-coupled network \eqref{full} with diagonalizable $G$, Further, assume that all eigenvalues 
of the model are special-repeat modes (i.e., the model has no network-invariant
modes). Then, if the global model is controllable, the 
linearly-coupled network model is controllable.
\label{th:special}
\end{theorem}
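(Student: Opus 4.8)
The plan is to verify, for every distinct eigenvalue of the full state matrix, the non-vanishing criterion of Lemma~\ref{th:neccsuff}, using Lemma~\ref{lem:repeat} to control the subsystem-side eigenvectors and the Hautus test on $(G,S)$ for the network side. Fix a distinct eigenvalue $\mu$ of $I_N\otimes A+G\otimes BC$. By hypothesis the model has no network-invariant modes, so $\mu$ is a special-repeat mode; hence, by the generalized-eigenvalue dichotomy, $\mu$ is an eigenvalue of $A+\lambda BC$ for only finitely many $\lambda$, namely the elements $\nu_1,\hdots,\nu_p$ of the network-repeat set $NR(\mu)$, with $p\le rank(BC)$. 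The indices $j$ for which $\mu$ is an eigenvalue of $A+\lambda_j BC$ then split into groups according to which $\nu_k$ equals $\lambda_j$. Since two blocks with equal $\lambda_j$ are literally the same matrix, I will take the left eigenvectors to be common within each group, writing ${\bf w}_{jl}^T={\bf w}_{k,l}^T$ for every $j$ with $\lambda_j=\nu_k$ and $l=1,\hdots,\beta(\nu_k)$; this costs nothing, since the criterion of Lemma~\ref{th:neccsuff} is unaffected by a change of eigenvector basis within a block.

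With this grouping in place, the quantity in Lemma~\ref{th:neccsuff},
\[
\sum_{j=1}^N\sum_{l=1}^{\beta_j(\mu)} a_{j,l}\,({\bf v}_j^T S)\otimes({\bf w}_{jl}^T B),
\]
rewrites as $\sum_{k=1}^p\sum_{l=1}^{\beta(\nu_k)}{\bf c}_{k,l}^T\otimes({\bf w}_{k,l}^T B)$ with ${\bf c}_{k,l}^T=\bigl(\sum_{j:\lambda_j=\nu_k} a_{j,l}\,{\bf v}_j^T\bigr)S$. By Lemma~\ref{lem:repeat} the vectors ${\bf w}_{k,l}^T B$, taken over all admissible pairs $(k,l)$, are linearly independent, and there are at most $rank(BC)$ of them. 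For a sum $\sum_i{\bf c}_i^T\otimes{\bf d}_i^T$ whose right factors ${\bf d}_i^T$ are linearly independent, vanishing of the sum forces every ${\bf c}_i^T$ to vanish (reshape each ${\bf c}_i^T\otimes{\bf d}_i^T$ as the outer product ${\bf c}_i{\bf d}_i^T$ and pair with a dual basis of the ${\bf d}_i$). Hence the displayed quantity is zero only if $\bigl(\sum_{j:\lambda_j=\nu_k} a_{j,l}\,{\bf v}_j^T\bigr)S=0$ for every admissible $(k,l)$.

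Now fix $(k,l)$ and set $\widetilde{\bf v}^T=\sum_{j:\lambda_j=\nu_k} a_{j,l}\,{\bf v}_j^T$. Diagonalizability of $G$ makes ${\bf v}_1^T,\hdots,{\bf v}_N^T$ a basis of left eigenvectors, and every ${\bf v}_j$ appearing in this sum is a left eigenvector of $G$ for the single eigenvalue $\nu_k$; consequently $\widetilde{\bf v}^T$ is either zero or itself a left eigenvector of $G$ at $\nu_k$. Controllability of the global model $(G,S)$ rules out a left eigenvector of $G$ that annihilates $S$ (Hautus test), so $\widetilde{\bf v}^T S=0$ forces $\widetilde{\bf v}^T=0$, and then linear independence of ${\bf v}_1^T,\hdots,{\bf v}_N^T$ gives $a_{j,l}=0$ for every $j$ with $\lambda_j=\nu_k$. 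Ranging over all $(k,l)$ shows every $a_{j,l}$ equals zero, so the criterion of Lemma~\ref{th:neccsuff} holds at $\mu$; since $\mu$ was an arbitrary eigenvalue of the full model, the linearly-coupled network is controllable.

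I expect the only real obstacle to be the bookkeeping in the middle step: one must handle the possibility that $G$ carries repeated eigenvalues (so several blocks $A+\lambda_j BC$ coincide and have to be bundled together) and at the same time use linear independence on \emph{both} factors of the Kronecker product --- Lemma~\ref{lem:repeat} for the $B$-projected subsystem eigenvectors and diagonalizability of $G$ for the network eigenvectors --- before applying the Hautus test block by block. Verifying that the common choice of eigenvectors within a bundle does not weaken Lemma~\ref{th:neccsuff} is the one point that needs care; the remainder is routine linear algebra.
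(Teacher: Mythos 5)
Your proof is correct and follows essentially the same route as the paper's: apply the eigenvector criterion of Lemma~\ref{th:neccsuff} to each distinct eigenvalue $\mu$, invoke Lemma~\ref{lem:repeat} for linear independence of the projections ${\bf w}^T B$ so that the Kronecker sum can vanish only if every network-side coefficient vector annihilates $S$, and then rule that out by the Hautus test on $(G,S)$. Your explicit bundling of blocks sharing the same $\lambda_j$ and the observation that the resulting combination of ${\bf v}_j^T$'s is still a left eigenvector of $G$ is a slightly more careful treatment of repeated eigenvalues of $G$ than the paper's one-line ``${\bf v}_j^T$ may be any eigenvector associated with $\lambda_j$,'' but it is the same argument.
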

\begin{proof}
From Lemma \ref{th:neccsuff}, the linearly-coupled network model is controllable if 
$\sum_{j=1}^N \sum_{l=1}^{\beta_{j}(\mu)} \alpha_{j,l} ({\bf v}_j^T S) \otimes
({\bf w}_{jl}^T B) \neq 0$ for all $\alpha_{j,l}$ which are not identically $0$, 
for each distinct eigenvalue $\mu$ of the linearly-coupled network model.  
Consider a particular eigenvalue $\mu$, which by assumption is a special-repeat mode. 
Consider the case that $\mu$ is an eigenvalue of 
$A+\lambda_i BC$ for $g$ distinct eigenvalues $\lambda_i$ of $G$, where 
$g\le rank(BC)$ from Lemma \ref{lem:repeat}; let us refer to these eigenvalues of $G$
a $\lambda_1,\hdots,\lambda_g$ without loss of generality.  (We notice that these
eigenvalues of $G$ also may also recur in other blocks $A+\lambda_i BC$).  
In this notation, the controllability condition can be written as:
$\sum_{j=1}^g \sum_{l=1}^{\beta_j(\mu)} \alpha_{j,l} ({\bf v}_j^T S) \otimes ({\bf w}_{jl}^T B)$,
where ${\bf v}_j^T$ may be any eigenvector associated with the eigenvalue $\lambda_j$.
From Lemma \ref{lem:repeat}, we notice that ${\bf w}_{jl}^T B$ for each pair $(j,l)$ is linearly
independent.  Thus, the eigenvalue $\mu$ is controllable if ${\bf v}_j^T S \neq 0$ for 
all $j$.  However, this is guaranteed from controllability of the global model.  
The argument can be repeated for each distinct eigenvalue $\mu$ of the linearly-coupled
network model, hence controllability is shown. $\square$
\end{proof}

Theorem \ref{th:special} demonstrates that special-repeat eigenvalues are harmless, in the sense that they cannot prevent controllability of the linear network model.  

{\em Remark:} In contrast with the result presented here, special-repeat-type modes (i.e., eigenvalues of $A+\lambda BC$ that repeat only for finite sets of $\lambda$) can 
block controllability for the broader class of network
models considered in \cite{control1,control2,control3}.  Indeed, the counterexample shown in the comment \cite{comment} arises from
a special-repeat-type mode.  The reason for the difference is
that the linear couplings and the subsystem input matrix
do not have the same range space in these models (i.e., the coupling matrix is $D \neq BC$).
In this broader circumstance, special-repeat modes can result in uncontrollability, which means that 
particular choices of the network matrix may lead to uncontrollability while other choices allow control.

\subsection{Discussion}

Our analysis has shown that the eigenvalues of the linear
network model can be categorized into two types, 
which we call network-invariant modes and special-repeat modes.  The two types of modes, and their implications on network controllability, have been characterized above.  A key outcome of the analysis is that controllability of the linear-network model is solely tied to the presence or absence of network-invariant modes, rather than special-repeat modes.  This characterization of
controllability has several interesting implications:
\begin{itemize}
\item  A linearly-coupled network model without network-invariant modes (with diagonalizable $G$) is  controllable only if the global model $(S,B)$ is controllable.  Thus, controllability
can be distilled to a condition on only the network 
topology and input locations in this case.  The wide
range of graph-theoretic results on the controllability
of scalar diffusive/consensus processes can be brought to
bear (e.g. \cite{egerstedt}), if the topology matrix $G$ has a diffusive structure.
\item  If the linearly-coupled network model has network-invariant modes, then controllability is lost regardless of the network's graph topology, unless actuation is provided
at a sufficient fraction of nodes.  If these 
network-invariant modes are projection-fixed, then 
controllability is lost unless actuation is provided
at all nodes in the network.
\item  If the subsystem model is SISO, then the linearly-coupled network model has network-invariant modes if and only if the 
subsystem is either unobservable or uncontrollable. 
Thus, the linear network model is controllable if 
1) the global model is controllable, and 2) the subsystem model is controllable and observable.
\item  The controllability of built networks differs from that of multi-agent systems, because of the design freedom
available in multi-agent-system control.  In particular,
protocols in multi-agent systems can be designed to 
avoid network-invariant modes, which means that 
they can be designed so that controllability reduces entirely
to a graph-level condition.  Indeed, the study \cite{control5} 
provides a design process that ensures controllability, in a state-feedback setting.
\item  While the effort here has been focused on controllability, the results naturally translate to the dual question of observability from measurements of a subset of subsystems. 
\end{itemize}

\bibliographystyle{IEEEtran}
\bibliography{modalbib}

\end{document}